\newcommand{\Nat}{\mathbb{N}}
\newcommand{\lY}{$\lambda Y$}
\newcommand{\arr}{\mathbin{\to}}
\newcommand{\otyp}{\mathsf{o}}
\newcommand{\ord}{\mathit{ord}}
\newcommand{\inc}{a}
\newcommand{\zero}{e}
\newcommand{\nd}{\mathit{b}}
\newcommand{\redb}{\to_\beta}
\newcommand{\pr}{{\mathsf{pr}}}
\newcommand{\np}{{\mathsf{np}}}
\renewcommand{\r}{\otyp}%{\mathbf r}}
\newcommand{\dom}{\mathit{dom}}
\newcommand{\low}{\mathit{val}}
\newcommand{\restr}{{\restriction}}
\newcommand{\Pp}{\mathcal{P}}
\newcommand{\Tt}{\mathcal{T}}
\newcommand{\Ttrip}{\mathcal{\widehat T}}%\!\!T}}
\newcommand{\AppR}{\TirName{(\!@\!)}\xspace}
\newcommand{\LamR}{\TirName{($\lambda$)}\xspace}
\newcommand{\VarR}{\TirName{(Var)}\xspace}
\newcommand{\lamdots}{.\cdots{}.}
\newcommand{\unbound}{\mathsf U}
\newtheorem{theorem}{Theorem}
\newtheorem{lemma}[theorem]{Lemma}
\newtheorem{corollary}[theorem]{Corollary}
\theoremstyle{definition}
\newtheorem{example}{Example}
\newtheorem{openpr}{Open Problem}
\title{Intersection Types for Unboundedness Problems%
	\thanks{Work supported by the National Science Centre, Poland (grant no.\ 2016/22/E/ST6/00041).}}
\author{Paweł Parys
	\institute{Institute of Informatics, University of Warsaw, Poland}
	\email{parys@mimuw.edu.pl}
}
\begin{document}
\maketitle

\begin{abstract}
	Intersection types have been originally developed as an extension of simple types, but they can also be used for refining simple types. 
	In this survey we concentrate on the latter option; more precisely, on the use of intersection types for describing quantitative properties of simply typed lambda-terms.
	We present two type systems.
	The first allows to estimate (by appropriately defined value of a derivation) the number of appearances of a fixed constant $a$ in the beta-normal form of a considered lambda-term.
	The second type system is more complicated, and allows to estimate the maximal number of appearances of the constant $a$ on a single branch.
\end{abstract}

\section{Introduction}
	Intersection types have been originally developed as an extension of simple types, but they can also be used for refining simple types. 
	In this survey we concentrate on the latter option; more precisely, on the use of intersection types for describing quantitative properties of simply typed lambda-terms.
	
	We consider lambda-terms as generators of trees.
	To this end, we assume a unique ground sort\footnote{%
		Following the convention in this area, we use the word ``sort'' for simple types, and the word ``type'' for intersection types refining them.}
        $\otyp$ describing trees, and we assume some uninterpreted constants, which are functions of order at most $1$.
	Then, a beta-normal form of a closed lambda-term of the ground sort does not contain any lambda-binders---it is just an applicative term composed of the uninterpreted constants,
	and thus can be seen as a tree.
        In other words, in the effect of calling a function $a$ with some trees as arguments, we obtain a new tree with a root labeled by $a$, and with the arguments attached in the children of the root.
	
	Suppose now that we have a closed lambda-term $M$ of the ground sort, and we want to estimate some quantities concerning its beta-normal form $T$.
	As a first example of such a quantity we can take the number of appearances of some fixed constant $a$ in $T$.
	How can we read this number from the original lambda-term $M$?
	As a first approach, we can look at the number of appearances of the constant $a$ in $M$.
	This can be completely inappropriate, though, for two reasons.
	First, we can have in $M$ some appearances of the constant $a$ that will be removed during beta-reductions.
	Second, maybe the constant $a$ appears in $M$ only once, but it will be replicated a lot of times during beta-reductions.
	In order to take into account these two phenomena we design an appropriate type system;
	a type derivation for the lambda-term $M$ identifies the places in $M$ that are really responsible for producing some constants $a$ in the beta-normal form $T$,
	so that these places can be counted.
	The type system realizing this goal is presented in Section~\ref{sec:deterministic}.
	
	Another quantity of the tree $T$ is the largest number of appearances of some fixed constant $a$ on a single branch in $T$.
	While the quantity of the first kind can be called deterministic, this one is slightly more complicated, and can be called nondeterministic.
	The justification of such a name is that while looking locally at some fragment of $T$ we do not know whether the constants $a$ appearing in this fragment should be counted or not
	(i.e., whether they are located on the branch of $T$ containing the largest number of constants $a$).
	We thus have to non-locally (nondeterministically) choose some branch of $T$ on which the constants $a$ should be counted.
	A type system that allows to estimate the above quantity is presented in Section~\ref{sec:nondeterministic}.
	
	The following quantity is even more involved: what is the largest number $n$ such that 
	the binary tree with all nodes labeled by $a$ and all branches of length $n$ embeds homeomorphically in the considered tree $T$?
	In a sense, this quantity combines three elements: taking the maximum, taking the minimum, and counting.
	Indeed, we take here the maximum over all embeddings of trees with all nodes labeled by $a$ of the minimum of lengths of paths in the chosen tree
	(and internally, we count the number of constants $a$ on the chosen path).
	Unfortunately, the presented methods do not allow to estimate this quantity;
	it is an open problem to construct a type system concerning this quantity.
	
	One may wonder why we want to have the aforementioned type systems, instead of just expanding the lambda-term $M$ into its beta-normal form $T$, and computing the quantity there.
	The answer is: compositionality.
	Suppose that $M$ is an application $K\,L$.
	If we know types derivable for $K$ and for $L$, we can determine types derivable for $K\,L$.
	Moreover, knowing the quantities assigned to type derivations for $K$ and for $L$ we can determine the quantity assigned to type derivations for $K\,L$.
	We thus have a composable abstraction of every lambda-term: a set of its types, and a tuple of numbers 
	(with a bound on the size of this set and on the length of this tuple that depends only on the sort of the lambda-term, not on its size).
	Existence of such an abstraction has some interesting implications.
	
	In particular, the research presented here is motivated by applications in the area of higher-order recursion schemes.
	Recursion schemes, or equivalently terms of the \lY-calculus, form an extension of the simply typed lambda-calculus by a fixed-point operator $Y$~\cite{Damm82,KNU-hopda,Ong-hoschemes,kobayashiOng2009type}.
	Trees generated by recursion schemes can be used to faithfully represent the control flow of programs in languages with higher-order functions~\cite{Kobayashi-types}.
	We remark that the same class of trees can be generated by collapsible pushdown systems~\cite{collapsible} 
	and ordered tree-pushdown systems~\cite{DBLP:conf/fsttcs/ClementePSW15}.
	
	Intersection type systems were intensively used in the context of recursion schemes, for several purposes like
	model-checking~\cite{Kobayashi-types,kobayashiOng2009type,DBLP:conf/csl/BroadbentK13,DBLP:conf/popl/RamsayNO14},
	pumping~\cite{koba-pumping,koba-pumping-new},
	transformations of HORSes~\cite{context-sensitive-2,word2tree,downward-closure}, etc.
	Interestingly, constructions very similar to intersection types were used also on the side of collapsible pushdown systems; 
	they were alternating stack automata~\cite{saturation}, and types of stacks~\cite{ho-new,Kar-Par-pumping}.
	The type systems are also closely connected to linear logic~\cite{linear-logic-1,linear-logic-2}.

	The type system of Section~\ref{sec:deterministic} is based on the type system from Parys~\cite{numbers-journal}.
	A similar type system was used to prove that some trees generated by recursion schemes cannot be generated by so-called safe recursion schemes~\cite{ho-new}.
	The type system of Section~\ref{sec:nondeterministic} comes from Parys~\cite{itrs,diagonal-types}.
	It implies decidability of the model-checking problem for trees generated by recursion schemes against formulae of the WMSO+U logic~\cite{wmsou-schemes}.
	It also allows to solve the simultaneous unboundedness problem (aka.~diagonal problem) for recursion schemes, which was first solved in a different way~\cite{downward-closure}.

\section{Preliminaries}

	The set of \emph{sorts} is constructed from a unique basic sort $\otyp$ using a binary operation $\arr$.
	Thus $\otyp$ is a sort and if $\alpha,\beta$ are sorts, so is $(\alpha\arr\beta)$.
	The \emph{order} of a sort is defined by: $\ord(\otyp)=0$, and $\ord(\alpha\arr\beta)=\max(1+\ord(\alpha),\ord(\beta))$;
	in other words, $\ord(\alpha_1\arr\dots\arr\alpha_k\arr\otyp)=1+\max_{i\in\{1,\dots,k\}}\ord(\alpha_i)$ whenever $k\geq 1$.

	%A \emph{ranked alphabet} is a set of letters $\Sigma$, together with a function $\rank\colon\Sigma\to\Nat$ assigning ranks to letters.
	
	A \emph{signature} is a set of constants, that is, symbols with associated sorts.
	For simplicity, 
	in this paper we use a signature consisting of three constants: $\inc$ of sort $\otyp\arr\otyp$, and $\nd$ of sort $\otyp\arr\otyp\arr\otyp$, and $\zero$ of sort $\otyp$
	(it is easy to generalize the methods to an arbitrary signature, assuming that sorts of constants are of order at most $1$).

	The set of \emph{(simply typed) lambda-terms} is defined by induction as follows:
	\begin{compactitem}
	\item	constants (node constructors)---a constant of sort $\alpha$ is a lambda-term of sort $\alpha$;
	\item	variables---for each sort $\alpha$ there is a countable set of variables $x^\alpha,y^\alpha,\dots$ that are also lambda-terms of sort $\alpha$;
	\item	lambda-binders---if $K$ is a lambda-term of sort $\beta$ and $x^\alpha$ a variable of sort $\alpha$ then $\lambda x^\alpha.K$ is a lambda-term of sort $\alpha\arr\beta$;
	\item	applications---if $K$ is a lambda-term of sort $\alpha\arr\beta$ and $L$ is a lambda-term of sort $\alpha$ then $K\,L$ is a lambda-term of sort $\beta$.
	\end{compactitem}
	As usual, we identify lambda-terms up to alpha-conversion (renaming of bound variables).
	We often omit the sort annotation of variables, but please keep in mind that every variable is implicitly sorted.
	A term is called \emph{closed} when it does not have free variables.
	The \emph{order} of a lambda-term $M$, denoted $\ord(M)$, is defined as the order of the sort of $M$,
	while the \emph{complexity} of $M$ is defined as the maximum of orders of subterms of $M$.

	A sort $\alpha_1\arr\dots\arr\alpha_k\arr\otyp$ is \emph{homogeneous} if $\ord(\alpha_1)\geq\dots\geq\ord(\alpha_k)$ and all $\alpha_1,\dots,\alpha_k$ are homogeneous (defined by induction).
	A lambda-term is homogeneous if all its subterms have homogeneous sorts.
	In order to avoid some technicalities, in this paper we only consider homogeneous lambda-terms.
	This is without loss of generality, since there is a simple syntactic transformation converting every closed lambda-term of sort $\otyp$ into a homogeneous lambda-term
	having the same beta-normal form~\cite{homogeneity}.
%	The advantage of the homogenity assumption is that for every lambda-term of the form $\lambda x.K$ we have $\ord(x)=\ord(\lambda x.K)-1$.

	We use the usual notion of beta-reduction: we have $M\redb N$ if $N$ can be obtained from $M$ by replacing some of its subterms of the form $(\lambda x.K)\,L$ by $K[L/x]$.
	We recall that simply typed lambda-calculus has the properties of strong normalization and confluence, that is,
	every sequence of beta-reductions from a lambda-term $M$ eventually terminates in a unique lambda-term $N$ such that no more beta-reductions can be performed from $N$;
	the lambda-term $N$ is called the \emph{beta-normal form} of $M$.
	Observe that the beta-normal form of a closed lambda-term of sort $\otyp$ is an applicative term build of constants (it does not contain variables nor lambda-binders), 
	and thus can be seen as a tree (generated by the lambda-term).
	
	In this paper we are interested in two particular reduction strategies (i.e., strategies of choosing a redex that should be reduced next).
	In the \emph{OI strategy}, we always reduce an \emph{outermost redex}, that is, a redex that is not located inside another redex.
	Notice that if $M$ is closed and of sort $\otyp$, then every outermost redex in $M$ is also closed.
	A redex $(\lambda x.K)\,L$ is a \emph{redex of order $m$} if $\ord(\lambda x.K)=m$.
	Assuming that the lambda-term is homogeneous, we have $\ord(\lambda x.K)=m$ if and only if $\ord(x)=m-1$.
	In the \emph{RMF strategy} we always reduce a \emph{rightmost redex of the maximal order}, that is, 
	a redex $(\lambda x.K)\,L$ of some order $m$ such that in the lambda-term there is no redex of a higher order, and in $L$ there are no redexes of order $m$,\label{pag:rmf}
	and the redex is not located inside $K'$ for some order-$m$ redex $(\lambda x'.K')\,L'$.
	In other words, whenever we see an order-$m$ redex $(\lambda x.K)\,L$, we first reduce all order-$m$ redexes in $L$, then the redex itself, and then we continue reducing the resulting lambda-term.
	We also write RMF$(m)$ to make it explicit that the order of the considered redex is $m$.
	When a closed lambda-term $M$ of sort $\otyp$ has complexity $m$ (and is not in the beta-normal form), then an RMF$(m)$ reduction always exist;
	thus following the RMF strategy we first reduce all redexes of order $m$ (until reaching a term of complexity $m-1$), then all redexes of order $m-1$, and so on.
	Moreover, for an RMF$(m)$ redex $(\lambda x.K)\,L$ in such a lambda-term, all variables appearing in $L$ are of order at most $m-2$.	

	Suppose that we have two functions $f,g\colon X\to\Nat$, over some domain $X$.
	We want to define when $f$ estimates $g$.
	To this end, we say that $f$ is \emph{dominated} by $g$, written $f\preceq g$, if there exists a function $\eta\colon\Nat\to\Nat$ such that $f(x)\leq\eta(g(x))$ for all $x\in X$,
	and we say that $f$ \emph{estimates} $g$, written $f\approx g$, if $f\preceq g$ and $g\preceq f$.
	It is easy to see that $f$ estimates $g$ if and only if on every subset $Y$ of the domain $X$, the functions $f$ and $g$ are either both bounded or both unbounded.
	The above relation between functions is widely used in the area of regular cost functions (see, e.g., Colcombet~\cite{regular-cost-functions}).

	One may also consider infinite lambda-terms.
	Clearly they do not reduce to a normal form in a finite number of steps, but we can consider the (unique) normal form reached in the limit, called the \emph{B\"ohm tree}.
	As in the finite case, the B\"ohm tree of a closed lambda-term of sort $\otyp$ is a (potentially infinite) tree build out of constants.
	A \emph{recursion scheme} is a finite description of a regular (i.e., having finitely many different subterms) infinite lambda-term.

\section{Deterministic Quantities}\label{sec:deterministic}

	In this section we present a type system that allows to estimate the number of appearances of the constant $\inc$ in the beta-normal form of a lambda-term.
	The type system should be such that a type derivation for a closed lambda-term $M$ of sort $\otyp$ 
	identifies the places in $M$ that are responsible for producing some $a$-labeled nodes in the beta-normal form $T$ of $M$.
	To this end, we extend the notion of sorts by a \emph{productivity flag}, which can be $\pr$ (standing for productive) and $\np$ (standing for nonproductive).

	It may happen that a single lambda-term $K$ has multiple types; for example, $\lambda y.y\,(a\,e)$ is productive when the function (substituted for) $y$ uses its argument, and nonproductive otherwise.
	Because of that, we need intersection types (i.e., the ability of assigning multiple types to the same lambda-term).
	
	In effect, our types differ from sorts in that on the left side of $\arr$, instead of a single type, we have a set of pairs $(f,\tau)$,
	where $\tau$ is a type, and $f$ is a flag from $\{\pr,\np\}$.
	The unique atomic type is denoted $\r$.
	More precisely, for each sort $\alpha$ we define the set $\Tt^\alpha$ of types of sort $\alpha$ as follows:
	\begin{align*}
		\Tt^\otyp=\{\r\},\qquad 
		\Tt^{\alpha\arr\beta}=\Pp(\{\pr,\np\}\times\Tt^\alpha)\times\Tt^\beta,
	\end{align*}
	where $\Pp$ denotes the powerset.
	A type $(T,\tau)\in\Tt^{\alpha\arr\beta}$ is denoted as $\bigwedge T\arr\tau$,
	or $\bigwedge_{i\in I}(f_i,\tau_i)\arr\tau$ when $T=\{(f_i,\tau_i)\mid i\in I\}$.
	The empty intersection is denoted by $\top$.
	To a lambda-term of sort $\alpha$ we assign not only a type $\tau\in\Tt^\alpha$, but also a flag $f\in\{\pr,\np\}$ (which together form a pair $(f,\tau)$).
	
	Intuitively, a lambda-term has type $\bigwedge T\arr\tau$ when it can return $\tau$, while taking an argument for which we can derive all pairs (of a flag and a type) from $T$;
	simultaneously, while having such a type, the lambda-term is obligated to use its arguments in all ways described by type pairs from $T$.
	And, we assign the flag $\pr$ (productive), when this term (while being a subterm of a closed term of sort $\otyp$) increases the number of constants $a$ in the resulting tree.
	To be more precise, a term is productive in two cases.
	First, when it uses the constant $\inc$.
	Notice however that this $\inc$ has to be really used: there exist terms which syntactically contain $\inc$, but the result of this $\inc$ is then ignored, like in $(\lambda x.\zero)\,\inc$.
	Second, a term which takes a productive argument and uses it at least twice is also productive 
	(for example, the productive argument may be a function that creates an $a$-labeled node; 
	when a lambda-term uses such an argument twice, the lambda-term is itself responsible for increasing the number of constants $a$ in the resulting tree).
	
	A \emph{type judgment} is of the form $\Gamma\vdash M:(f,\tau)$, where we require that the type $\tau$ and the term $M$ are of the same sort.
	The \emph{type environment} $\Gamma$ is a set of bindings of variables of the form $x^\alpha:(f,\tau)$, where $\tau\in\Tt^\alpha$.
	In $\Gamma$ we may have multiple bindings for the same variable.
	By $\dom(\Gamma)$ we denote the set of variables $x$ that are bound by $\Gamma$, and
	by $\Gamma\restr_\pr$ we denote the set of those binding from $\Gamma$ that use flag $\pr$.
	%We introduce the order on flags defined by $\np<\pr$, which allows us to use ``$\max$'' and ``$\min$'' operations for flags.

	We now gradually present rules of the type system.
	We begin with rules for node constructors:	
	\begin{mathpar}
	\inferrule{}{\vdash \inc:(\pr,(f,\r)\arr\r)}\and
	\inferrule{}{\vdash \nd:(\np,(f_1,\r)\arr(f_2,\r)\arr\r)}\and
	\inferrule{}{\vdash \zero:(\np,\r)}
	\end{mathpar}
	Since we aim at counting constants $a$, we say here that $a$ is productive, while $\nd$ and $\zero$ are nonproductive.
	Notice that productivity of a node constructor does not depend on productivity of the argument; flags of the arguments ($f, f_1, f_2$) can be arbitrary.

	Then we have a rule for a variable:
	\begin{mathpar}
	\inferrule{}{x:(f,\tau)\vdash x:(\np,\tau)}
	\end{mathpar}
	The type of the variable is taken from the environment. 
	The flag is always $\np$, though; by just using a variable we are not productive at all 
	(and in the productivity flag we want to cover productivity of the lambda-term itself, not of lambda-terms that may be potentially substituted for free variables).
	
	The rule that talks about lambda-binders is very natural; it just moves type pairs from the argument to the environment:
	\begin{mathpar}
	\inferrule*[right=($\lambda$)]{\Gamma\cup\{x:(f_i,\tau_i)\mid i\in I\}\vdash K:(f,\tau)\\x\not\in \dom(\Gamma)}
		{\Gamma\vdash\lambda x.K:(f,\bigwedge\nolimits_{i\in I}(f_i,\tau_i)\arr\tau)}
	\end{mathpar}
	
	Finally, we have the most complicated rule, for application:
	\begin{mathpar}
	\inferrule*[right=$(@)$]{\Gamma\vdash K:(f',\bigwedge\nolimits_{i\in I}(f_i^\bullet,\tau_i)\arr\tau)\\
		\Gamma_i\vdash L:(f_i^\circ,\tau_i)\mbox{ for each }i\in I}
		{\Gamma\cup\bigcup\nolimits_{i\in I}\Gamma_i\vdash K\,L:(f,\tau)}
	\end{mathpar}
	where we assume that 
	\begin{itemize}
	\item	every pair $(f_i^\bullet,\tau_i)$ is different (where $i\in I$),
	\item	for each $i\in I$, $f_i^\bullet=\pr$ if and only if $f_i^\circ=\pr$ or $\Gamma_i\restr_\pr\neq\emptyset$, and
	\item	$f=\pr$ if and only if $f'=\pr$, or $f_i^\circ=\pr$ for some $i\in I$, or $|\Gamma\restr_\pr|+\sum_{i\in I}|\Gamma_i\restr_\pr|>|(\Gamma\cup\bigcup_{i\in I}\Gamma_i)\restr_\pr|$.
	\end{itemize}

	Let us explain the above conditions.
	The first condition is technical: we need to provide exactly one derivation for every needed type pair.
	The second condition says that when $K$ requires a ``productive'' argument, either we can apply an argument $L$ that is itself productive, 
	or we can apply a nonproductive $L$ that uses a productive variable; in the latter case, after substituting something for the variable, $L$ will become productive.
	The third condition says that $K\,L$ is productive if $K$ is productive, or if $L$ is productive, or if some productive free variable is duplicated 
	(i.e., used in at least two subderivations simultaneously).
	
	Notice that weakening of type environments is disallowed: $\Gamma\vdash M:(f,\tau)$ does not necessarily imply $\Gamma,x:(g,\sigma)\vdash M:(f,\tau)$;
	in other words, every binding $x:(g,\sigma)$ in the type environment (and thus every pair $(g,\sigma)$ assigned to an argument) has to be really used somewhere in the type derivation.
	This property of the type system is very expected, if we recall that we want to distinguish lambda-terms that really use their (productive) arguments from those in which the arguments are discarded.
	On the other hand, contraction is allowed: we may say that $\Gamma,x:(g,\sigma),x:(g,\sigma)\vdash M:(f,\tau)$ implies $\Gamma,x:(g,\sigma)\vdash M:(f,\tau)$,
	since a type environment is a set of type bindings.
	As we see in the \AppR rule, such contractions (for productive type binding) cause productivity of lambda-terms.
	
	A \emph{derivation} is defined as usual: it is a tree labeled by type judgments, such that each node together with its children fit to one of the rules of the type system.
	
	We now define a \emph{value} of every node of a derivation, saying how much this node is productive.
	In a node using the rule for the constant $a$, the value is $1$.
	In a node using the \AppR rule with type environments $\Gamma$ and $\Gamma_i$ for $i\in I$, the value is 
	\begin{align*}
		|\Gamma\restr_\pr|+\sum\nolimits_{i\in I}|\Gamma_i\restr_\pr|-|(\Gamma\cup\bigcup\nolimits_{i\in I}\Gamma_i)\restr_\pr|\,.
	\end{align*}
	Spelling this out, the value in such a node equals the number of productive type bindings together in all the type environments $\Gamma$, $(\Gamma_i)_{i\in I}$,
	minus the number of such type bindings in their union.
	In other words, it says how many times we have to duplicate some productive type bindings before splitting them between type environments of subderivations.
	In all other nodes the value is $0$.

	For a derivation $D$, the \emph{value} of $D$, denoted $\low(D)$, is the sum of values of all nodes in $D$.
	We can easily see that the value of a derivation $D$ is positive if and only if $D$ is productive (i.e., the flag in the derived type judgment is $\pr$).
	The main theorem says that $\low(D)$ can be used to estimate the the number of constants $a$ in normal forms of lambda-terms.
		
	\begin{theorem}\label{thm:det}
		The following holds for the type system introduced above:
		\begin{compactenum}[(D1)]
		\item\label{it:d1} 
			for every $m\in\Nat$ there is a function $\eta_m\colon\Nat\to\Nat$ such that 
			if $M$ is a homogeneous and closed lambda-term of sort $\otyp$ and complexity at most $m$, and $D$ is a derivation for $\vdash M:(f,\r)$,
			then the number of constants $a$ in the normal form of $M$ is 
			\begin{compactenum}
			\item[(D1A)]\label{it:d1a} 
				at least $\low(D)$, and 
			\item[(D1B)]\label{it:d1b} 
				at most $\eta_m(\low(D))$;
			\end{compactenum}
		\item\label{it:d2} 
			for every closed lambda-term $M$ of sort $\otyp$ one can derive $\vdash M:(f,\r)$ (for some $f\in\{\pr,\np\}$).\footnote{%
				Actually, one can even prove that there is a unique derivation concerning $M$ (assuming that $M$ is closed and of sort $\otyp$).}		
		\end{compactenum}
	\end{theorem}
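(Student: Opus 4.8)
The plan is to split the statement into the purely qualitative part~(D2) and the quantitative estimates~(D1), and to base everything on a single \emph{substitution lemma} describing how a derivation and its value $\low$ behave when one redex is contracted. All the real content will then be squeezed out of this lemma: an inequality in one direction gives the lower bound (D1A), and a quantitative bound on the same data, applied layer by layer, gives the upper bound (D1B).

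For (D2) I would first type the normal form directly. Since $M$ is closed and of sort $\otyp$, its normal form is a finite tree built only from $\inc$, $\nd$, $\zero$, and such a tree is typable by induction on its structure: the leaf $\zero$ uses its axiom, and each internal $\inc$- or $\nd$-node combines the corresponding constant axiom with the \AppR rule applied to the already-typed children, the argument flags being free to choose. It then remains to push the derivation back along the reduction of $M$ to its normal form, for which I would prove subject expansion: if $M\redb M'$ and $M'$ is typable, so is $M$. The core is the reverse direction of the substitution lemma---given a derivation of $K[L/x]$, reconstruct one derivation of $L$ per type pair actually demanded at the occurrences of $x$, and reassemble the redex via \LamR and \AppR. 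The uniqueness claimed in the footnote follows by checking that for a closed term of sort $\otyp$ the rules are forced (the head constant fixes the axiom, and the demanded argument types fix the premises), so induction yields exactly one derivation.

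For (D1) the crucial tool computes the value change under one contraction of $(\lambda x.K)\,L$. Each occurrence of $x$ in $K$ receives a private copy of the subderivation of $L$ that matches its type pair, so the values of these copies are summed; meanwhile the duplication charge that \AppR had made inside $K$ for a shared binding of $x$ is redistributed, since copying $L$ also replicates its free productive variables---the charge reappears either as extra copies of the subderivation of $L$ or as fresh sharing of those variables at deeper \AppR nodes. The decisive claim to prove is that this redistribution never loses value, i.e.\ \emph{$\low$ is non-decreasing along $\redb$}: when a productive $L$ is duplicated the single ``$+1$'' that \AppR charged is replaced by a genuine second copy of $\low(D_L)\ge 1$, and a nonproductive $L$ that nevertheless carries internal productivity contributes its whole value afresh in each copy. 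Since the normal form is closed, every \AppR node there carries empty environments and contributes $0$, so $\low(D_{\mathrm{nf}})$ equals $N$, the number of constants $\inc$ in $\mathrm{nf}(M)$ (one per $\inc$-axiom); monotonicity then gives $\low(D)\le N$, which is exactly (D1A).

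Part (D1B) is where I expect the genuine difficulty, and it is what forces the complexity-dependent $\eta_m$: a single duplication can undercount by $(\low(D_L)-1)$ per extra copy, and such undercounting compounds when productive material is hidden under nonproductive variables and duplicated at successively higher orders. To control it I would reduce by \emph{layers} with the RMF strategy: eliminate all order-$m$ redexes first (reaching complexity $m-1$), then recurse. Exploiting homogeneity---so that an RMF$(m)$ redex binds a variable $x$ of order $m-1$ and all free variables of its argument have order at most $m-2$---I would bound the value after one layer by a function $g_m$ of the value before, using that the number of distinct types of any fixed sort is finite and bounded in terms of the sort alone; hence the number of productive bindings that can be shared, and the number of copies a single subderivation can spawn in one layer, is bounded in terms of the current value. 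Iterating, $\low(D_{\mathrm{nf}})\le(g_1\circ\cdots\circ g_m)(\low(D))=:\eta_m(\low(D))$, and since $\low(D_{\mathrm{nf}})=N$ this yields (D1B). The main obstacle is precisely establishing the per-layer bound $g_m$: one must argue that although a nonproductive subterm may be copied many times, the productive content it carries is already accounted for by $\low$, so the blow-up introduced by one layer depends only on the value and on the sort-bounded number of available types, not on the size of $M$.
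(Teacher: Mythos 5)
Your decomposition is the same as the paper's: for (D2) type the normal form directly and pull the derivation back by subject expansion; for (D1A) prove that the surgery contracting a redex never decreases $\low$ (the duplication charge $k$ at an \AppR node is traded for $k$ extra copies of a subderivation of $L$ that is productive or uses a productive binding) and observe that any derivation for the normal form has value exactly the number of $a$'s; for (D1B) reduce layer by layer with the RMF strategy and exploit homogeneity. Two remarks. First, a small omission in (D2): subject expansion as you describe it only reconstructs $\Gamma'\vdash(\lambda x.K)\,L$ for some $\Gamma'\subseteq\Gamma$, because you keep only one subderivation of $L$ per demanded type pair and may thereby drop bindings of $\Gamma$ used solely in the discarded copies; since weakening is forbidden this can break the enclosing \LamR and \AppR nodes. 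The paper sidesteps this by expanding along the OI strategy, where every contracted redex is closed, so the lemma is only needed for $\Gamma=\emptyset$. You should say which reduction sequence you expand along.

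The genuine gap is in (D1B), exactly where you flag it. Your proposed mechanism---that finiteness of the type sets bounds ``the number of copies a single subderivation can spawn in one layer''---does not work: a single type pair for $x$ can be demanded at arbitrarily many occurrences in $K$ (the bound $k+1$ on copies in \emph{one step} comes from the value, not from the number of types), and a layer consists of unboundedly many steps, so per-step doublings compound into a factor that is a priori controlled only by the size of $M$. The paper's actual argument is an amortization over the whole layer: whenever the value is multiplied by $k+1$ (a productive subderivation of $L$ inserted at $k+1$ leaves), the $k$ units that paid for it are the duplication charges of an order-$(m-1)$ binding $x:(\pr,\sigma)$, and these charges sit at \AppR nodes that are \emph{never copied during the entire layer}---because only subterms whose free variables have order at most $m-2$ are ever duplicated by RMF$(m)$ reductions, while all bindings consumed by order-$m$ reductions are for order-$(m-1)$ variables. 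Hence the total number of multiplying events across the layer is at most $\low(D_0)$, giving $\low(D_n)\le 2^{\low(D_0)}$ and a per-layer function $g_m$ independent of the size of $M$. Without this ``pay from never-copied nodes'' accounting, your $g_m$ is not established.
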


	\begin{example}\label{ex:1}
		%We show  
		Observe how the type system behaves for the lambda-term $M=(\lambda y.N\,(N\,(N\,y))\,(a\,e))\,a$, where $N=\lambda y.\lambda x.y\,(y\,x)$.
%		Consider the lambda-term $M=(\lambda y.N\,(N\,(N\,y))\,(a\,e))\,a$ with $N=\lambda y.\lambda x.y\,(y\,x)$,
%		and 
		We start with a derivation concerning $N$, where we write $\tau_y^\pr$ for $(\pr,\r)\arr\r$: % and $\tau_y^\np$ for $(\np,\r)\arr\r$:
		%we write $b_x$ for the binding $x:(\pr,\r)$ and $b_y$ for the binding $y:(\pr,(\pr,\r)\arr\r)$:
		\begin{mathpar}
		\inferrule*[Right=$(\lambda)$,leftskip=-7.7em,rightskip=-13em]{
			\inferrule*[Right=$(\lambda)$,leftskip=1.1em,rightskip=1.1em]{
				\inferrule*[Right=$(@)$,leftskip=6.6em,rightskip=6.6em]{
					y:(\pr,\tau_y^\pr)\vdash y:(\np,\tau_y^\pr)
				\and
					\inferrule*[right=$(@)$,leftskip=1em,rightskip=5.3em]{
						y:(\pr,\tau_y^\pr)\vdash y:(\np,\tau_y^\pr)
					\and
						x:(\pr,\r)\vdash x:(\np,\r)
					}{
						y:(\pr,\tau_y^\pr),\,x:(\pr,\r)\vdash y\,x:(\np,\r)
					}
				}{
					y:(\pr,\tau_y^\pr),\,x:(\pr,\r)\vdash y\,(y\,x):(\pr,\r)
				}
			}{
				y:(\pr,\tau_y^\pr)\vdash\lambda x.y\,(y\,x):(\pr,\tau_y^\pr)
			}
		}{
			\vdash N:(\pr,(\pr,\tau_y^\pr)\arr\tau_y^\pr)
		}
		\end{mathpar}
		Notice that the type $\tau_y^\pr$ requires a productive argument, but (in both the \AppR rules above) we apply an argument that is not productive itself.
		This is possible, because the type judgments for the arguments have productive type bindings in the type environments 
		(and hence for the purposes of the \AppR rule they are assumed to be productive).
		The lower use of the \AppR rule has value $1$ (and in effect the productivity flag is set to $\pr$), 
		because the productive type binding $y:(\pr,\tau_y^\pr)$ is taken to both children.

		Below, we have another derivation concerning $N$, where we write $\tau_y^\np$ for $(\np,\r)\arr\r$:
		\begin{mathpar}
		\inferrule*[Right=$(\lambda)$,leftskip=-5.2em,rightskip=-10.5em]{
			\inferrule*[Right=$(\lambda)$,leftskip=1.1em,rightskip=1.1em]{
				\inferrule*[Right=$(@)$,leftskip=4.1em,rightskip=4.1em]{
					y:(\pr,\tau_y^\pr)\vdash y:(\np,\tau_y^\pr)
				\and
					\inferrule*[right=$(@)$,leftskip=1em,rightskip=5.3em]{
						y:(\pr,\tau_y^\np)\vdash y:(\np,\tau_y^\np)
					\and
						x:(\np,\r)\vdash x:(\np,\r)
					}{
						y:(\pr,\tau_y^\np),\,x:(\np,\r)\vdash y\,x:(\np,\r)
					}
				}{
					y:(\pr,\tau_y^\pr),\,y:(\pr,\tau_y^\np),\,x:(\np,\r)\vdash y\,(y\,x):(\np,\r)
				}
			}{
				y:(\pr,\tau_y^\pr),\,y:(\pr,\tau_y^\np)\vdash\lambda x.y\,(y\,x):(\np,\tau_y^\np)
			}
		}{
			\vdash N:(\np,(\pr,\tau_y^\pr)\wedge(\pr,\tau_y^\np)\arr\tau_y^\np)
		}
		\end{mathpar}
		This time the value of all nodes is $0$, because every type binding is used in exactly one place.
		Likewise, it is possible to derive five other type pairs for the lambda-term $N$:
		\begin{align*}
		&(\np,(\pr,\tau_y^\pr)\wedge(\np,\tau_y^\pr)\arr\tau_y^\pr)\,,&
		&(\np,(\np,\tau_y^\pr)\arr\tau_y^\pr)\,,\\
		&(\np,(\pr,\tau_y^\np)\wedge(\np,\tau_y^\np)\arr\tau_y^\np)\,,&
		&(\np,(\np,\tau_y^\np)\arr\tau_y^\np)\,.\\
		&(\np,(\np,\tau_y^\pr)\wedge(\pr,\tau_y^\np)\arr\tau_y^\np)\,,&
		\end{align*}
		While deriving a type for $M$, we only need one type pair for $N$: the type pair $(\pr,(\pr,\tau_y^\pr)\arr\tau_y^\pr)$ derived at the beginning.
		But we remark that if the lambda-term was $M'=(\lambda y.N\,(N\,(N\,y))\,e)\,a$ 
		(we have replaced here $a\,e$ by $e$, and thus the first call to $N$ receives a nonproductive argument as $x$), 
		it would be necessary to use both the above derivations for $N$.

		Denoting the type $(\pr,\tau_y^\pr)\arr\tau_y^\pr$ as $\tau_N$, we continue the derivation for $M$:		
		\begin{mathpar}
		\inferrule*[Right=$(@)$,rightskip=-9.5em]{
			\vdash N:(\pr,\tau_N)
		\and
			\inferrule*[Right=$(@)$,leftskip=1em,rightskip=4.2em]{
				\vdash N:(\pr,\tau_N)
			\and
				\inferrule*[right=$(@)$,leftskip=1em,rightskip=5.3em]{
					\vdash N:(\pr,\tau_N)
				\and
					y:(\pr,\tau_y^\pr)\vdash y:(\np,\tau_y^\pr)
				}{
					y:(\pr,\tau_y^\pr)\vdash N\,y:(\pr,\tau_y^\pr)
				}
			}{
				y:(\pr,\tau_y^\pr)\vdash N\,(N\,y):(\pr,\tau_y^\pr)
			}
		}{
				y:(\pr,\tau_y^\pr)\vdash N\,(N\,(N\,y)):(\pr,\tau_y^\pr)
		}
		\and
		\inferrule*[right=$(@)$,leftskip=-3.5em]{
			\inferrule*[right=$(\lambda)$,rightskip=1em]{
				\inferrule*[Right=$(@)$,leftskip=5em,rightskip=5em]{
					y:(\pr,\tau_y^\pr)\vdash N\,(N\,(N\,y)):(\pr,\tau_y^\pr)
				\and
					\inferrule*[right=$(@)$,leftskip=1em,rightskip=5.5em]{
						\vdash a:(\pr,\tau_y^\np)
					\and
						\vdash e:(\np,\r)
					}{
						\vdash a\,e:(\pr,\r)
					}
				}{
					y:(\pr,\tau_y^\pr)\vdash N\,(N\,(N\,y))\,(a\,e):(\pr,\r)
				}
			}{
				\vdash\lambda y.N\,(N\,(N\,y))\,(a\,e):(\pr,(\pr,\tau^\pr_y)\arr\r)
			}
		\and
			\vdash a:(\pr,\tau_y^\pr)
		}{
			\vdash M:(\pr,\r)
		}
		\end{mathpar}
		
		The total value of this derivation is $5$ ($2$ in the two nodes concerning $a$, and $3$ in the three subderivations concerning $N$), 
		while the normal form of $M$ contains $9$ appearances of the constant $a$.
		Notice that while adding any further $N$ to the sequence $N\,(N\,(N\,y))$, we increase the value by $1$, while we almost double the number of $a$'s in the normal form.
%		
%		We remark that it is possible 
%		\begin{mathpar}
%		\inferrule*[Right=$(@)$,rightskip=-6.5em]{
%			b_y\vdash y:(\np,(\pr,\r)\arr\r)\\
%			\inferrule*[right=$(@)$,leftskip=1em,rightskip=6.5em]{
%				b_y\vdash y:(\np,(\pr,\r)\arr\r)\\
%				b_z\vdash z:(\np,\r)
%			}{b_z,\ b_y\vdash y\,z:(\np,\r)}
%		}{b_z,\ b_y\vdash y\,(y\,z):(\pr,\r)}
%		\and
%		\inferrule*[Right=$(\lambda)$,leftskip=-6em,rightskip=-13em]{
%			\inferrule*[Right=$(@)$,leftskip=6em,rightskip=6em]{
%				\vdash\inc:(\pr,(\pr,\r)\arr\r)\\
%				\inferrule*[right=$(@)$,leftskip=1em,rightskip=7em]{
%					\vdash\inc:(\pr,(\pr,\r)\arr\r)\\
%					x:(\pr,\r)\vdash x:(\np,\r)
%				}{x:(\pr,\r)\vdash\inc\,x:(\pr,\r)}
%			}{x:(\pr,\r)\vdash\inc\,(\inc\,x):(\pr,\r)}
%		}{\vdash\lambda x.\inc\,(\inc\,x):(\pr,(\pr,\r)\arr\r)}
%		\end{mathpar}
%		The duplication factor of the root node of the first derivation is $1$, because the binding for $y$ is used in both subderivations;
%		the other nodes have duplication factors $0$. 
%	
%		It is possible to derive six other type judgments containing the term $y\,(y\,z)$:
%		\begin{align*}
%		&y:(\np,(\pr,\r)\arr\r),\,z:(\pr,\r)\vdash y\,(y\,z):(\np,\r),\\
%		&y:(\pr,(\pr,\r)\arr\r),\,y:(\np,(\pr,\r)\arr\r),\,z:(\pr,\r)\vdash y\,(y\,z):(\np,\r),\\
%		&y:(\pr,(\pr,\r)\arr\r),\,y:(\pr,(\np,\r)\arr\r),\,z:(\np,\r)\vdash y\,(y\,z):(\np,\r),\\
%		&y:(\pr,(\np,\r)\arr\r),\,y:(\np,(\np,\r)\arr\r),\,z:(\np,\r)\vdash y\,(y\,z):(\np,\r),\\
%		&y:(\np,(\pr,\r)\arr\r),\,y:(\pr,(\np,\r)\arr\r),\,z:(\np,\r)\vdash y\,(y\,z):(\np,\r),\\
%		&y:(\np,(\np,\r)\arr\r),\,z:(\np,\r)\vdash y\,(y\,z):(\np,\r).
%		\end{align*}
	\end{example}
		
	\paragraph*{Proofs.}
	Let us now sketch the proof of Theorem~\ref{thm:det}.
	While proving Condition~\hyperref[it:d1]{(D1)}, it is convenient to consider the RMF strategy of reductions (defined on Page~\pageref{pag:rmf}).
	We have the following subject-reduction lemma for reductions of this kind.
	
	\begin{lemma}\label{lem:subj-red-det}
		If $D_0$ is a derivation for $\vdash M_0:(f,\otyp)$, where $M_0$ is homogeneous, closed, and of complexity $m$ (and of sort $\otyp$), 
		and $M_0\redb M_1\redb\dots\redb M_n$ is a sequence of RMF$(m)$ beta-reductions,
		then there exists a derivation $D_n$ for $\vdash M_n:(f,\otyp)$ such that $\low(D_0)\leq\low(D_n)$ and $\low(D_n)\leq 2^{\low(D_0)}$.
	\end{lemma}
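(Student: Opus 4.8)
The plan is to prove the lemma by isolating the effect of a single RMF$(m)$ reduction step and then reassembling the whole sequence. First I would establish a substitution lemma: if $D_K$ derives $\Gamma_K\cup\{x:(f_i^\bullet,\tau_i)\mid i\in I\}\vdash K:(f',\tau)$ and each $E_i$ derives $\Gamma_i\vdash L:(f_i^\circ,\tau_i)$ (these are exactly the hypotheses that an \AppR node over the redex $(\lambda x.K)\,L$ supplies, after peeling off the \LamR node that produced $\lambda x.K$), then there is a derivation for $K[L/x]$ obtained by traversing $D_K$ and plugging a fresh copy of $E_i$ into every variable-leaf that consumes the binding $x:(f_i^\bullet,\tau_i)$. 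Here I would use homogeneity together with the RMF$(m)$ discipline: since $x$ has order $m-1$ and all free variables of $L$ have order at most $m-2$, and since rightmost-first reduction guarantees that $L$ carries no order-$m$ redex, the substitution introduces no new order-$m$ redex and the types line up without clashes.

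The single-step value accounting is the technical heart of the substitution lemma. The decisive observation is that a binding $x:(f_i^\bullet,\tau_i)$ with $f_i^\bullet=\np$ is consumed by a nonproductive subderivation $E_i$, whose value is $0$ (recall that $\low$ is positive exactly on productive derivations); hence copying such an $E_i$ any number of times costs nothing. Only the productive bindings matter, and the number $c_i$ of leaves consuming a given productive binding $x:(\pr,\tau_i)$ is already paid for inside $D_K$: to reach $c_i$ leaves this binding must be split by \AppR nodes whose duplication values sum to $c_i-1$, and these values are part of $\low(D_0)$. I would record an exact identity for the value of the rebuilt derivation in terms of $\low(D_K)$, the $\low(E_i)$, the multiplicities $c_i$, and the fresh duplications created when the $c_i$ copies of each $\Gamma_i$ are merged.

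From the single-step lemma the lower bound $\low(D_0)\le\low(D_n)$ is the routine direction, proved by induction on $n$: no $\inc$-leaf is ever deleted (each is copied at least once), the \LamR node that disappears had value $0$, and the duplication value carried by the reduced \AppR node reappears as merge-duplications among the copied environments $\Gamma_i$, so the value can only grow.

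The main obstacle is the upper bound $\low(D_n)\le 2^{\low(D_0)}$, which cannot come from a naive per-step induction, since one step may double the value while $n$ is unbounded. Instead I would argue globally over the whole sequence: because every RMF$(m)$ reduction only copies existing subderivations, each node of $D_n$ has a well-defined origin in $D_0$, and its multiplicity is the product $\prod_j c_j$ of the duplication factors of the productive bindings traversed along its substitution ancestry. The key inequality is $c_j\le 2^{\,c_j-1}$, which turns this product into $2^{\sum_j(c_j-1)}$; since the quantities $c_j-1$ count distinct recorded productive duplications, their sum is bounded by $\low(D_0)$, so every multiplicity is at most $2^{\low(D_0)}$. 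To land on the clean bound rather than a product of it with the number of origins, I would phrase this as an induction on the structure of $D_0$ in which the exponential of the total value absorbs the branching, using $2^{x+y}\ge 2^x+2^y$ for $x,y\ge 1$ to combine sibling subderivations. Checking that both the copied $\inc$-leaves and the freshly created merge-duplications stay within this single exponential is the delicate point, and is exactly where the RMF$(m)$ structure---guaranteeing that each argument $L$ is substituted already in order-$m$-normal form, so that multiplicities genuinely compound as one product rather than interleaving---does the real work.
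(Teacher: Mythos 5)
Your proposal is correct and follows essentially the same route as the paper's proof: the same leaf-plugging surgery for a single RMF$(m)$ step, the same compensation argument for the lower bound (the $c_i-1$ duplications of a productive binding $x:(\pr,\tau_i)$ inside $K$ are repaid by the $c_i$ copies of the productive-or-productive-environment subderivation for $L$, while nonproductive bindings cost nothing), and the same global charging of duplication factors against $\low(D_0)$ for the upper bound, resting on the observation that the order-$(m-1)$ duplication nodes are never inside a copied argument because free variables of $L$ have order at most $m-2$. Your $c_j\le 2^{c_j-1}$ / $2^{x+y}\ge 2^x+2^y$ refinement is precisely the ``slightly more careful analysis'' the paper alludes to for improving $\low(D_0)\cdot 2^{\low(D_0)}$ to $2^{\low(D_0)}$.
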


	Because the maximal complexity $m$ of the lambda-term $M$ considered in Theorem~\ref{thm:det} is fixed, 
	using Lemma~\ref{lem:subj-red-det} $m$ times (for complexities $m,m-1,\dots,1$) we obtain a derivation $D_T$ for the normal from $T$ of $M$ 
	such that $\low(D)\leq\low(D_T)$ and $\low(D_T)$ is bounded by a function of $\low(D)$,
	that is, $\low(D)$ estimates $\low(D_T)$.
	It remains to notice that $\low(D_T)$ is exactly the number of $a$-labeled nodes in the tree $T$.
	
	\begin{proof}[Proof sketch (Lemma~\ref{lem:subj-red-det})]
		We proceed by induction: for every $i\in\{1,\dots,n\}$ out of the derivation $D_{i-1}$ for $\vdash\nobreak M_{i-1}:(f,\otyp)$ we construct a derivation $D_i$ for $\vdash M_i:(f,\otyp)$.
		To this end, we consider every subderivation $D$ of $D_{i-1}$ starting with a type judgment $\Gamma\vdash(\lambda x.K)\,L:(g,\tau)$ concerning the redex involved in the reduction $M_{i-1}\redb M_i$;
		we need to replace it by a derivation $D'$ for $\Gamma\vdash K[L/x]:(g,\tau)$.
		We obtain $D'$ by a surgery on $D$:
		we take the subderivation of $D$ concerning $K$, we replace every leaf deriving a type $\sigma$ for $x$ by the subderivation of $D$ deriving this type $\sigma$ for $L$,
		and we update type environments and productivity flags appropriately.

		Notice that every subderivation concerning $L$ is moved to at least one leaf concerning $x$ (nothing can disappear).
		The only reason why the value of the derivation can decrease is that potentially a productive type binding $x:(\pr,\sigma)$ was duplicated (say, $k$ times) in the derivation concerning $K$.
		In $D'$ this binding is no longer present (in $K[L/x]$ there is no $x$) so the value gets decreased by $k$,
		but in this situation the subderivation deriving $\sigma$ for $L$ becomes inserted in $k+1$ leaves.
		This subderivation is either productive itself, or uses a productive type binding in the environment;
		in both cases by creating $k$ additional copies of this subderivation we increase the value at least by $k$, compensating the loss caused by elimination of $x$.
		This implies that $\low(D)\leq\low(D')$, hence $\low(D_{i-1})\leq\low(D_i)$ (and, in effect, $\low(D_0)\leq\low(D_n)$).
		
		Conversely, the only reason why the value can grow is that some derivation concerning $L$ (that is either productive itself or uses some productive type bindings for its free variables)
		becomes inserted in $k+1$ leaves, for some $k\geq 1$.
		In the worst case, this may cause that the value (of the whole derivation for $M$) gets multiplied by $k+1$.
		But, simultaneously, in the subderivation concerning $K$, the productive type bindings for $x$ are removed,
		which decreases the value by $k$ in some nodes of this subderivation.
		The point is now that these nodes were never copied in the reduction sequence from $D_0$ to the considered $D_{i-1}$;
		this is because all the reductions are RMF$(m)$ reductions.
		Indeed, looking from the other side, all variables appearing in (the copied subderivation for) $L$ are of order at most $m-2$---%
		as observed on Page~\pageref{pag:rmf}---%
		but all variables involved in future order-$m$ reductions (i.e., all variables that we remove from type environments) are of order $m-1$---%
		because of homogeneity of the lambda-term.
		Thus, whenever we multiply the value of the current derivation by at most $k+1$,
		we subtract $k$ from the value of the original derivation $D_0$.
		The worst case is when $\low(D_0)$ times we decrease the value by $1$, and $\low(D_0)$ times we multiply it by $2$.
		It follows that $\low(D_n)\leq\low(D_0)\cdot2^{\low(D_0)}$; a slightly more careful analysis shows that actually $\low(D_n)\leq 2^{\low(D_0)}$.
	\end{proof}
	
	In the proof of Condition~\hyperref[it:d2]{(D2)}, saying that we can derive a type for every closed lambda-term $M$ of sort $\otyp$, we proceed backwards:
	it is easy to derive a type for a tree (i.e., for the normal form of $M$), 
	and thus it is enough to have a subject expansion lemma 
	saying that out of a derivation for a lambda-term after a beta-reduction we can construct a derivation for the lambda-term before the beta-reduction.
	This time we follow the OI reduction strategy.
	Because outermost redexes are closed, it is thus enough to have the following lemma.
	
	\begin{lemma}\label{lem:subj-exp}
		If we can derive $\vdash K[L/x]:(g,\tau)$, then we can also derive $\vdash (\lambda x.K)\,L:(g,\tau)$.
	\end{lemma}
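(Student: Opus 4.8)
The plan is to prove a subject-expansion statement by \emph{reversing} the surgery used for subject reduction (Lemma~\ref{lem:subj-red-det}). Since we follow the OI strategy and outermost redexes are closed, the argument $L$ in $(\lambda x.K)\,L$ is closed; I will use this repeatedly. For the induction to go through I first generalize the statement to open bodies: for every derivation $D$ of $\Gamma\vdash K[L/x]:(g,\tau)$ with $L$ closed and $x\notin\dom(\Gamma)$, I claim there are finitely many pairwise distinct type pairs $\{(f_j,\sigma_j)\mid j\in J\}$ and a flag $g'$ such that (a) $\Gamma\cup\{x:(f_j,\sigma_j)\mid j\in J\}\vdash K:(g',\tau)$ is derivable; (b) $\vdash L:(f_j,\sigma_j)$ is derivable for each $j\in J$; and (c) $g=\pr$ if and only if $g'=\pr$ or $f_j=\pr$ for some $j\in J$. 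Applying \LamR to (a) gives $\Gamma\vdash\lambda x.K:(g',\bigwedge_{j\in J}(f_j,\sigma_j)\arr\tau)$, and then \AppR, combined with the derivations from (b), yields $\Gamma\vdash(\lambda x.K)\,L:(g,\tau)$; condition (c) is exactly what guarantees that the flag produced by \AppR equals $g$ in the closed case (where the environments are empty, so the only disjuncts of the third \AppR condition are $g'=\pr$ and $f_j=\pr$). The lemma is the instance $\Gamma=\emptyset$.

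The claim is proved by induction on $K$ (equivalently, on $D$). If $K=x$, then $K[L/x]=L$ and $D$ derives $\vdash L:(g,\tau)$; I take $J=\{1\}$ with $(f_1,\sigma_1)=(g,\tau)$ and rebuild $K=x$ by \VarR, which forces $g'=\np$, so (c) holds because then $g=\pr$ if and only if $f_1=\pr$. If $K$ is a constant or a variable other than $x$, then $x$ does not occur, $J=\emptyset$, and $g'=g$. If $K=\lambda y.K'$, then $D$ ends with \LamR; I apply the induction hypothesis to the premise (for $K'[L/x]$ under an extended environment) and reapply \LamR, using that \LamR preserves the flag. The interesting case is $K=K_1\,K_2$, where $D$ ends with \AppR: the head $K_1[L/x]$ is typed once and the argument $K_2[L/x]$ is typed once for each $i\in I$. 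I apply the induction hypothesis to each of these premise derivations, obtaining binding sets $J_0$ (for the head) and $J_i$ (for the argument copies) with the corresponding derivations of $L$, and I reassemble \AppR for $K_1\,K_2$, pooling all $x$-bindings into one environment (equal pairs merge by contraction, keeping the intersection for $\lambda x.K$ well formed, and leaving the head type unchanged). A short calculation shows that the second \AppR condition survives this reassembly: the new argument environment satisfies $\Gamma'_i\restr_\pr\neq\emptyset$ precisely when $\Gamma_i\restr_\pr\neq\emptyset$ or some binding in $J_i$ is $\pr$, so, using the inductive (c) for $K_2$, the requirement $f_i^\bullet=\pr\iff f_i^\circ=\pr\vee\Gamma_i\restr_\pr\neq\emptyset$ of $D$ turns into the corresponding requirement for the skeleton without changing $f_i^\bullet$.

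The main obstacle is the productivity bookkeeping behind condition (c): I must verify that the flag of $K[L/x]$ decomposes exactly as ``the skeleton $K$ is productive, or some copy of $L$ is productive'', with neither loss nor double counting, and that this is preserved by the \AppR step. The delicate point is that a productive binding $x:(\pr,\sigma)$ \emph{duplicated} inside $K$ contributes to productivity through the contraction term of \AppR in the skeleton (hence to $g'$), whereas after substitution the same phenomenon appears as several productive copies of $L$ (hence through the ``$f_i^\circ=\pr$'' disjunct); the two accountings must agree. Here closedness of $L$ is essential: it makes $\Gamma_j\restr_\pr=\emptyset$ for every copy of $L$, so that the ``$f^\bullet$ versus $f^\circ$'' clause forces the two to coincide, and it prevents any productive binding from being shared between the skeleton and the interior of an $L$-copy, since their variables are disjoint. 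Granting this, each productivity source in $K[L/x]$ — a use of $\inc$, or a duplicated productive binding — is located either in the skeleton (contributing to $g'$) or inside an $L$-copy (making that copy productive, hence a pooled $\pr$-binding), and conversely each such contribution propagates up to the root; a case analysis along these lines closes the induction. Note that, unlike subject reduction, I need to control only the flag of the reconstructed derivation, not its value $\low$, so no quantitative estimate is required.
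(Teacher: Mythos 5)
Your proposal is correct and follows essentially the same route as the paper's proof: replace the subderivations concerning occurrences of $L$ by \VarR leaves for $x$, keep one derivation of $L$ per type pair, and reassemble with \LamR and \AppR, with the real work being the correction of productivity flags. You merely organize the surgery as a structural induction on $K$ with the explicit invariant (c), which is a reasonable (and somewhat more detailed) way of making precise what the paper calls ``correcting type environments and productivity flags''; your observation that closedness of $L$ forces $f^\bullet=f^\circ$ and keeps the contraction term consistent between the skeleton and the substituted term is exactly the point the paper leaves implicit.
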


	\begin{proof}[Proof sketch]
		In the derivation $D$ for $K[L/x]$ we replace every subderivation concerning $L$ by a leaf rule for the variable $x$,
		and we correct type environments and productivity flags in the rest of the derivation.
		This way we obtain a derivation for $K$ with type environment requesting some types for $x$.
		Simultaneously, each of these types was derived for $L$ in some subderivation of $D$ 
		(there may be multiple such subderivations, because $L$ may appear in many places in $K[L/x]$, but we choose only one subderivation for every type).
		It is not difficult to combine these derivations into a derivation concerning $(\lambda x.K)\,L$.
	\end{proof}

	We remark that by applying the above surgery to a derivation for $\Gamma\vdash K[L/x]:(g,\tau)$ 
	(i.e., for an arbitrary redex, having some free variables) we only obtain a derivation for $\Gamma'\vdash (\lambda x.K)\,L:(g,\tau)$ with some $\Gamma'\subseteq\Gamma$,
	but not necessarily with $\Gamma'=\Gamma$.
	The reason is that we remove some subderivations concerning $L$ (we leave only one for every type), 
	and possibly some type bindings from $\Gamma$ were used only in the removed subderivations.

	\paragraph*{Bibliographic Note.}
	As already mentioned in the introduction, the idea of the type system presented above originates from Parys~\cite{ho-new}.
	In that paper, a similar type system was introduced for configurations of collapsible pushdown systems.
	It was then used to prove that a restricted variant of these systems (systems without the so-called collapse operation) are less powerful than general collapsible pushdown systems.
	The type system was then transferred to the setting of lambda-terms in Parys~\cite{numbers-journal}.
	Their type system is slightly more complicated than ours, and allows to obtain a stronger version of Condition~\hyperref[it:d1]{(D1B)},
	where the function $\eta_m$ does not depend on the complexity $m$ of considered lambda-terms.

\section{Nondeterministic Quantities}\label{sec:nondeterministic}

	Suppose now that we want to estimate another quantity: the maximal number of appearances of the constant $\inc$ on a single branch in the beta-normal form $T$ of a lambda-term $M$.
	It seems that in order to describe this quantity, it is enough to take the type system from Section~\ref{sec:deterministic}, and replace the rule for the constant $b$ by two rules:
	\begin{mathpar}
	\inferrule{}{\vdash \nd:(\np,(f,\r)\arr\top\arr\r)}\and
	\inferrule{}{\vdash \nd:(\np,\top\arr(f,\r)\arr\r)}
	\end{mathpar}
	In these rules we ignore one of the arguments, and we descend only to the other one.
	This way, every type derivation $D$ for a tree $T$ follows one branch in $T$, and in effect $\low(D)$ equals to the number of constants $a$ on that branch.
	By arguments like in the previous section we obtain the following, rather useless, properties of the modified type system:
	\begin{compactenum}[(N1)]
	\item\label{it:n1} 
		for every $m\in\Nat$ there is a function $\eta_m\colon\Nat\to\Nat$ such that 
		if $M$ is a homogeneous and closed lambda-term of sort $\otyp$ and complexity at most $m$, and $D$ is a derivation for $\vdash M:(f,\r)$,
		then the number of constants $a$ on some branch of the normal form of $M$ is
		\begin{compactenum}
		\item[(N1A)]\label{it:n1a} 
			at least $\low(D)$, and 
		\item[(N1B)]\label{it:n1b} 
			at most $\eta_m(\low(D))$;
		\end{compactenum}
	\item\label{it:n2} 
		for every closed lambda-term $M$ of sort $\otyp$ one can derive $\vdash M:(f,\r)$ (for some $f\in\{\pr,\np\}$).
	\end{compactenum}

	These properties are not satisfactory for us, because they only say that there exists a branch with the number of constants $a$ estimated by $\low(D)$, for some derivation $D$.
	We, however, are interested in the branch on which the number of constants $a$ is maximal.
	In other words: if in the beta-normal form $T$ of $M$ there are two branches, one with just a few constants $a$, and the other with a lot of them,
	we expect to have two derivations $D$ and $D'$, where $\low(D)$ is small (corresponds to the first branch), and $\low(D')$ is large
	(corresponds to the second branch).
	But Condition~\hyperref[it:n2]{(N2)} gives us only one derivation, and we do not know which one.
	Thus, we rather need to have the following property:
	\begin{compactenum}
	\item[(N2$'$)\!]\label{it:n2p} 
		for every $m\in\Nat$ there is a function $\eta_m\colon\Nat\to\Nat$ such that if $M$ is a homogeneous and closed lambda-term of sort $\otyp$ and complexity at most $m$
		and on some branch of the beta-normal form of $M$ there are $n$ appearances of the constant $a$,
		then there is a derivation $D$ for $\vdash M:(f,\r)$ such that $n\leq\eta_m(\low(D))$.
	\end{compactenum}

	In the light of Condition~\hyperref[it:n2p]{(N2$'$)}, Condition~\hyperref[it:n1b]{(N1B)} becomes redundant, and thus we can restate Condition~\hyperref[it:n1a]{(N1A)} as follows:
	\begin{compactenum}
	\item[(N1$'$)\!]\label{it:n1p} 
		if $M$ is a homogeneous and closed lambda-term of sort $\otyp$, and $D$ is a derivation for $\vdash M:(f,\r)$,
		then the number of constants $a$ on some branch of the normal form of $M$ is at least $\low(D)$.
	\end{compactenum}
	
	It is, though, an open problem whether Condition~\hyperref[it:n2p]{(N2$'$)} holds.
	
	\begin{openpr}
		Does the modified type system satisfy Condition~\hyperref[it:n2p]{(N2$'$)}?
	\end{openpr}

	In order to prove Condition~\hyperref[it:n2p]{(N2$'$)}, we should probably proceed backward:
	we should start with a derivation concerning (the branch with the maximal number of constants $a$ in) the normal form of $M$,
	and then, successively, from a derivation for a lambda-term after a beta-reduction obtain a derivation for the lambda-term before the beta-reduction.
	We have a subject expansion lemma (Lemma~\ref{lem:subj-exp}) only for redexes without free variables 
	(and it seems difficult to generalize it to arbitrary redexes, as explained at the end of the previous section);
	we should thus assume that we always reduce the outermost redex.
	In effect, in the considered sequence of beta-reductions from $M$ to its normal form we have to mix reductions concerning redexes of different orders.
	For such a sequence of reductions it is not clear how to estimate the value of the derivation for the beta-normal form $T$ by the value of the derivation for $M$.
	
	We remark that a modified type system, in which one allow weakening of type environments, satisfies a subject expansion lemma (like Lemma~\ref{lem:subj-exp}).
	But with unrestricted weakening of type environments Condition~\hyperref[it:n1p]{(N1$'$)} no longer holds.
	Indeed,	if weakening was allowed, we could use a derivation $D$ (with an arbitrary large value) for a lambda-term $M$ as a part of a derivation for a lambda-term like $(\lambda x.\zero)\,M$,
	whose normal form contains no $\inc$.
	The reason why weakening is forbidden is exactly this: we want to have subderivations only for subterms that really participate to the normal form.

	The life is thus not so simple: because we want both Conditions~\hyperref[it:n1p]{(N1$'$)} and~\hyperref[it:n2p]{(N2$'$)}, we have to introduce a more complicated type system.
	In this type system, instead of one kind of values of nodes, we have \emph{values of order $k$} (or \emph{$k$-values}) for every $k\in\{1,\dots,m+1\}$ 
	(where $m$ is the complexity of the considered lambda-term).
	We also mark some nodes as belonging to a \emph{zone of order $k$} (or \emph{$k$-zone}) for every order $k\in\{0,\dots,m\}$.

	Before defining the type system, let us first give some idea how Condition~\hyperref[it:n2p]{(N2$'$)} can be shown.
	Then, we give details of a type system motivated by this idea.

	Consider thus a lambda-term $M_m$ that is of complexity $m$, and reduces to a tree $M_0$.
	Following the RMF reduction strategy, we can find lambda-terms $M_{m-1},M_{m-2},\dots,M_1$ such that every $M_i$ is of complexity $i$ and all reductions between $M_i$ and $M_{i-1}$ are of order $i$.
	Our aim is to estimate the number of constants $a$ located on some branch in $M_0$.
	We thus mark all nodes of this branch as the $0$-zone, and we say that the order-$1$ value is $1$ in all nodes of the $0$-zone that are labeled by $a$.
	Next, we proceed back to $M_1$.
	Every node constructor in the $0$-zone in $M_0$ originates from some particular node constructor appearing already in $M_1$.
	We thus mark these node constructors in $M_1$ as belonging to the $0$-zone (notice that in $M_1$ they no longer form a branch);
	and again those of them that are $a$-labeled get $1$-value $1$.
	The crucial observation is that no two node constructors from the $0$-zone in $M_0$ can originate from a single node constructor of $M_1$.
	Indeed, all the beta-reductions between $M_1$ and $M_0$ are RMF$(1)$.
	In such a beta-reduction we take a whole subtree (i.e., a lambda-term of sort $\otyp$) of $M_1$, and we replace it somewhere, possibly replicating it.
	But since the considered nodes of $M_0$ lie on a single branch, they may belong to at most one copy of the replicated subtree.
	In effect, the total $1$-value in $M_1$ is the same as in $M_0$.
	
	We cannot directly repeat the same reasoning to move $1$-values from $M_1$ back to $M_2$, since now there is a problem: 
	a single node constructor in $M_2$ may result in multiple (uncontrollably many) node constructors with a $1$-value in $M_1$.
	We rescue ourselves in the following way.
	We choose some branch of $M_1$ (included in the $0$-zone) as the $1$-zone.
	Then, for every node of $M_1$ with positive $1$-value, we look for the closest ancestor of this node that lies in the $1$-zone,
	and in this ancestor we set the $2$-value to $1$.
	Notice that for multiple nodes with positive $1$-value, their closest ancestor lying in the $1$-zone may be the same 
	(and then we set its $2$-value to $1$, not to the number of these nodes).
	Thus, in general, the total $2$-value may be smaller than the total $1$-value.
	We can, however, ensure that it is smaller only logarithmically; 
	to do so, we choose a branch forming the $1$-zone in a clever way: staring from the root, we always proceed to the subtree with the largest total $1$-value.
	In effect, the total $2$-value of $M_1$ estimates the total $1$-value of $M_1$.

	Once all nodes of $M_1$ with positive $2$-value lie on a single branch (which is chosen as the $1$-zone), we can transfer them back to $M_2$ without changing their number:
	because reductions between $M_2$ and $M_1$ are RMF$(2)$, every node of the $1$-zone in $M_1$ originates from a different node of $M_2$.
	Then in $M_2$ we again choose a branch as the $2$-zone, we assign $3$-value to some its nodes, and so on.
	At the end we obtain some labeling of $M_m$ by zones and values of particular orders.
	The goal of the type system presented below is, roughly speaking, to ensure that a labeling of $M_m$ actually is obtainable in the process as above.
	In fact, we do not label nodes of $M_m$ itself, but rather nodes of a type derivation for $M_m$.

	We now come to a formal definition of the type system.

	\paragraph*{Type Judgments.}
	
	For every sort $\alpha$ we define the set $\Tt^\alpha$ of \emph{types} of sort $\alpha$,
	and the set $\Ttrip^\alpha_m$ of \emph{type triples} of sort $\alpha$.
	This is done as follows, where $\Pp$ denotes the powerset:
	\begin{align*}
		&\Tt^{\alpha\arr\beta}=\Pp(\Ttrip_{\ord(\alpha)}^\alpha)\times\Tt^\beta\,,\qquad
		\Tt^\otyp=\{\otyp\}\,,\\
		&\Ttrip_m^\alpha=
			%\bigcup{}_{B\in\{0,\dots,m\}}\times\{\emptyset,\{B+1\}\}\times\Tt^\alpha\,.
			\{(Z,F,\tau)\in\{0,\dots,m\}^2\times\Tt^\alpha\mid F\leq Z+1\}\,.
	\end{align*}
	Notice that the sets $\Tt^\alpha$ and $\Ttrip_m^\alpha$ are finite.
	A type $(T,\tau)\in\Tt^{\alpha\arr\beta}$ is denoted as $T\arr\tau$.
	A type triple $\hat\tau=(Z,F,\tau)\in\Ttrip_m^\alpha$ consists of a zone order $Z$, a productivity order $F$, and a type $\tau$.
	In order to distinguish types from type triples, the latter are denoted by letters with a hat, like $\hat\tau$.
	
	A \emph{type judgment} is of the form $\Gamma\vdash_m M:\hat\tau$, where $\Gamma$, called a \emph{type environment}, 
	is a set of bindings of the form $x^\alpha:\hat\sigma$ with $\hat\sigma\in\Ttrip^\alpha_{\ord(\alpha)}$,
	and $M$ is a lambda-term, and $\hat\tau$ is a type triple of the same sort as $M$ (i.e., $\hat\tau\in\Ttrip^\beta_m$ when $M$ is of sort $\beta$).
	We assume that $M$ is homogeneous.
	
	As previously, the intuitive meaning of a type $\bigwedge T\arr\tau$ is that a lambda-term having this type can return a lambda-term having type $\tau$, while taking an argument for which we can derive all type triples from $T$.
	Moreover, in $\Tt^\otyp$ there is just one type $\otyp$, which can be assigned to every lambda-term of sort $\otyp$.
	Suppose that a node of a type derivation for a closed and homogeneous lambda-term $M_m$ of sort $\otyp$ is labeled by a type judgment $\Gamma\vdash_m M:\hat\tau$ with $\hat\tau=(Z,F,\tau)$.
	Then
	\begin{itemize}
	\item	$\tau$ is the type derived for $M$;
	\item	$\Gamma$ contains type triples that could be used for free variables of $M$ in the derivation;
	\item	$m$ is an upper bound for the complexity of $M$ (this bound is not strict: in the proofs, it is useful to temporarily allow also lambda-terms $M$ of complexity $m+1$), 
		and simultaneously for orders of considered zones and values;
	\item	$Z\in\{0,\dots,m\}$ is the largest number such that for every $k\in\{0,\dots,Z\}$, the considered node of the derivation belongs to the $k$-zone;
	\item	$F\in\{0,\dots,m\}$ is the largest number such that for every $k\in\{1,\dots,F\}$, 
		in the imaginary lambda-term $M_k$ obtained from $M_m$ by reducing all redexes of order greater than $k$,
		the order-$k$ value will be positive in the subderivation starting in the considered node.
	\end{itemize}
	
	Notice that we always have that $Z\geq 0$, which means that every node of every derivation belongs at least to the $0$-zone.
	We choose zones in a derivation in such a way that for every node the set of orders $k$ of zones to which the node belongs is always of the form $\{0,\dots,Z\}$.
	For this reason in a type triple it is enough to have a number $Z$ (representing the set $\{0,\dots,Z\}$), instead of an arbitrary set of orders of zones.
	Moreover, if a node of a derivation belongs to a $k$-zone, then its parent as well; in effect, the zone order in the type triple labeling a parent cannot be smaller than in its child.
	Likewise, the set of orders $k$ for which the $k$-value is positive (after appropriate reductions) is always of the form $\{1,\dots,F\}$, so it is enough to remember its maximum.
	Moreover, if $k$-value is positive is some subderivation, then it is also positive in a larger subderivation, 
	hence also the productivity order in the type triple labeling a parent cannot be smaller than in a child.

	\paragraph{Type System.}

	We now give the first four rules, concerning node constructors:
	\begin{align*}
	&\inferrule{}{\vdash_m \nd:(Z,0,(0,0,\r)\arr\top\arr\r)}&
	&\inferrule{}{\vdash_m \inc:(Z,\min(Z+1,m),(0,0,\r)\arr\r)}\\
	&\inferrule{}{\vdash_m \nd:(Z,0,\top\arr(0,0,\r)\arr\r)}&
	&\inferrule{}{\vdash_m \zero:(Z,0,\r)}
	\end{align*}
	
	We say that the $k$-value in a node using the rule for the constant $a$ is $1$ for all $k\in\{1,\dots,Z+1\}$;
	for $k>Z+1$, and for the other constants the $k$-value is $0$.

	In the above rules we can choose $Z$ arbitrarily (from the set $\{0,\dots,m\}$), 
	which amounts to deciding to which zones the node constructor should belong: it belongs to the $k$-zone for $k\in\{0,\dots,Z\}$.
	For the constant $b$ we descend only to one argument (because we want to count constants $a$ only on a single branch of the normal form).
	For the constant $a$ we have set the $k$-value to $1$ for all $k\in\{1,\dots,Z+1\}$, hence we set the productivity order to $Z+1$.
	There is an exception for $Z=m$: by definition of $\Ttrip_m^\alpha$, the productivity order can be at most $m$, 
	so although the $(m+1)$-value is $1$ as well, this information is not covered by the productivity order.
	Notice that the type $(0,0,\r)$ assigned to arguments of node constructors is the only element of $\Ttrip^\otyp_{\ord(\otyp)}$;
	node constructors do not receive information about zones or values from their arguments.
	
	Next, we have a rule for a variable (in nodes using this rule, the $k$-value is $0$ for all $k$):
	\begin{mathpar}
	\inferrule*[right=(Var)]{(Z'=Z) \lor (Z'\geq\ord(x)=Z)}{x:(Z,F,\tau)\vdash_m x:(Z',F,\tau)}
	\end{mathpar}
	
	In order to understand this rule, suppose that it labels a node of a type derivation for a closed lambda-term $M$ of sort $\otyp$.
	Take some $k\in\{0,\dots,m\}$, and consider the lambda-term $M_k$ obtained from our lambda-term by reducing all redexes of orders greater than $k$.
	According to the proof idea presented above, we create the $k$-zone as a branch of $M_k$ (and then we transfer it back to $M$).
	Moreover, as the productivity order we should take at least $k$ if in $M_k$ the $k$-value is positive in the subtree starting in the considered node.
	If $k\leq\ord(x)$, the variable $x$ will be no longer present in $M_k$, and some lambda-term (described by the type environment) will be substituted for it.
	For this reason, the information about the $k$-zone and about positivity of the $k$-value is taken from the type environment.
	Conversely, if $k>\ord(x)$, the node (leaf) concerning $x$ will be still present in $M_k$, 
	and thus we can start the branch forming the $k$-zone there.
	But this is possible only if the node belongs to the $(k-1)$-zone; 
	in particular for $k=\ord(x)+1$ we need to be in the $\ord(x)$-zone, which is the case if $Z=\ord(x)$.
	Moreover, the total $k$-value in (the subtree starting in) the considered leaf is $0$, and thus the productivity order is taken from the environment (unlike in the previous type system).
	
	The rule for lambda-binders realizes a restricted variant of type weakening: we may ignore arguments that do not contain leaves of zones.
	This is formalized in the notion of balanced and unbalanced type triples, defined by induction on their structure.
	For $k\in\{0,\dots,m\}$, a type triple $(Z,F,\bigwedge T_1\arr\dots\arr\bigwedge T_n\arr\otyp)$ is \emph{$k$-unbalanced} if $Z\geq k$ and all elements of the sets $T_1,\dots,T_n$ are $k$-balanced;
	otherwise, the type triple is \emph{$k$-balanced}.
	A type triple is \emph{unbalanced} if it is $k$-unbalanced for some $k\in\{0,\dots,m\}$; otherwise it is \emph{balanced}.
	Intuitively, a subderivation derives a $k$-unbalanced type triple if the unique leaf of the $k$-zone is contained either in this subderivation, 
	or in an imaginary subderivation that will be substituted for a free variable.
	Indeed, the subderivation contains the leaf of the $k$-zone if it belongs to the $k$-zone, but none of the arguments provides the leaf.
	
%	Namely, we can ignore type bindings $x:(Z',F',\sigma)$ such that $Z'<\ord(x)$.
	We can now give the rule; for nodes using this rule, the $k$-value is $0$ for all $k$.
	\begin{mathpar}
	\inferrule*[right=($\lambda$)]{
			\Gamma\cup\{x:\hat\sigma\mid \hat\sigma\in T'\}\vdash_m K:(Z,F,\tau)
		\\
			\{\hat\sigma\in T\mid \hat\sigma\mbox{ unbalanced}\}\subseteq T'\subseteq T
		\\ 
			x\not\in\dom(\Gamma)
		}
		{\Gamma\vdash_m\lambda x.K:(Z,F,\bigwedge T\arr\tau)}
	\end{mathpar}
	
	As previously, the rule for application is the most complicated one:
	\begin{mathpar}
	\inferrule*[right=$(@)$]{
			\Gamma_0\vdash_m K:(Z_0,F_0,\tau_0)
		\\
			\Gamma_i\vdash_m L:(Z_i,F_i,\tau_i)\mbox{ for each }i\in I
		\\
			\tau_0=\bigwedge\nolimits_{i\in I}(\min(Z_i,\ord(L)),\min(F_i,\ord(L)),\tau_i)\arr\tau
		\and
			Z = \max\nolimits_{i\in\{0\}\cup I}Z_i
		\\ 
			\forall k\in\{0,\dots,m\}.\,|\{i\in\{0\}\cup I\mid (Z_i,F_i,\tau_i)\mbox{ $k$-unbalanced}\}|\leq 1
		}
		{\bigcup\nolimits_{i\in \{0\}\cup I}\Gamma_i\vdash_m K\,L:(Z,F,\tau)}
	\end{mathpar}
	where 
	\begin{itemize}
	\item	we assume that $0\not\in I$;
	\item	if there is $i\in \{0\}\cup I$ such that $\ord(L)\leq Z_i<F_i\leq Z$, then we set $F$ to $\min(Z+1,m)$,
		and we say that the $k$-value in the node using such a rule is $1$ for all $k\in\{F_i+1,\dots,Z+1\}$
		(if there are multiple such $i$, we consider the one for which $F_i$ is the smallest);
	\item	otherwise we set $F$ to $\max_{i\in\{0\}\cup I}F_i$, and the $k$-value to $0$ for all $k$.
	\end{itemize}

	Let us comment on the above conditions.
	First, notice that to the subderivation concerning $K$ we pass the information about $k$-values and $k$-zones from the subderivations concerning $L$ only for $k\leq\ord(L)$
	(i.e., we write $\min(Z_i,\ord(L))$ and $\min(F_i,\ord(L))$ instead of simply $Z_i$ and $F_i$).
	This is because, while thinking about $k$-values and about the $k$-zone, 
	we should imagine the lambda-term $M_k$ obtained from the lambda-term under consideration by reducing all redexes of orders greater than $k$.
	If $k\leq\ord(L)$, the application (for which we write the rule) is no longer present in $M_k$ (it gets reduced in some of the reductions leading to $M_k$),
	so we should pass the information from $L$ to $K$.
	Conversely, if $k>\ord(L)$, the application is still present in $M_k$; this means $K$ and $L$ are independent subterms there, and hence the information from $L$ should not be passed to $K$.
	This is complementary to what we said on the \VarR rule.
	
	Second, we also say for every $k$ that only one child can be $k$-unbalanced.
	Under the intuitive meaning that a conclusion of a subderivation is $k$-unbalanced if the subderivation contains the leaf of the $k$-zone (that remains a leaf in $M_k$),
	this condition ensures that the $k$-zone has at most one leaf, and thus forms a branch in $M_k$.
	
	Third, observe that the $(k+1)$-value in our node is set to $1$ if, in $M_k$, it is the closest ancestor of some node with positive $k$-value that lies in the $k$-zone.
	Indeed, suppose that the current node is still present in $M_k$ (i.e., that $k>\ord(L)$), and that it belongs to the $k$-zone (i.e., that $k\leq Z$).
	Moreover, suppose that in $M_k$ the $k$-value is positive in some node of the subderivation number $i$ (i.e., that $k\leq F_i$), where $i\in\{0\}\cup I$.
	If $k\leq Z_i$, then the closest ancestor being in the $k$-zone is already in the subderivation (because its root belongs to the $k$-zone).
	Conversely, if $k>Z_i$, the closest ancestor being in the $k$-zone is in our node.
	Recall that (by definition of type triples) we always have $F_i\leq Z_i+1$.
	All the inequalities hold when $\ord(L)+1\leq Z_i+1=k=F_i\leq Z$, and this is exactly the situation when we set the $(k+1)$-value of the current node to $1$.
	If the node is also in the $(k+1)$-zone (i.e., if $k+1\leq Z$), then the closest ancestor being in the $(k+1)$-zone is in the node itself.
	It thus makes sense that we also set the $(k+2)$-value of the current node to $1$.
	Repeating this again, we should set to $1$ the values of all orders in $\{k+1,\dots,Z+1\}$.

	Denoting the $k$-value of a derivation $D$ by $\low^k(D)$, we can state the desired properties of our type system.
		
	\begin{theorem}\label{thm:nondet}
		The following holds for the type system introduced above:
		\begin{compactenum}[(N1$''$)]
		\item\label{it:n1pp} 
			if $M$ is a homogeneous and closed lambda-term of sort $\otyp$, and $D$ is a derivation for $\vdash_m M:(m,m,\r)$,
			then the number of constants $a$ on some branch of the normal form of $M$ is at least $\low^{m+1}(D)$;
		\item\label{it:n2pp} 
			for every $m\in\Nat$ there is a function $\eta_m\colon\Nat\to\Nat$ such that if $M$ is a homogeneous and closed lambda-term of sort $\otyp$ and complexity at most $m$,
			and on some branch of the beta-normal form of $M$ there are $n\geq 1$ appearances of the constant $a$,
			then there is a derivation $D$ for $\vdash_m M:(m,m,\r)$ such that $n\leq\eta_m(\low^{m+1}(D))$.
		\end{compactenum}
	\end{theorem}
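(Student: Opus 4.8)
The plan is to establish both conditions by ``transfer'' arguments along the RMF reduction strategy, mirroring the proof of Theorem~\ref{thm:det} but now tracking the whole family of $k$-values and $k$-zones simultaneously. Throughout I keep the level $m$ fixed and follow the sequence of terms $M_m,M_{m-1},\dots,M_0$, where $M_m=M$, each $M_k$ has complexity $k$, the tree $M_0$ is the normal form of $M$, and all reductions between $M_k$ and $M_{k-1}$ are RMF$(k)$ reductions.

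For Condition~\hyperref[it:n1pp]{(N1$''$)} I would argue by subject reduction. The key lemma, proved by a surgery on derivations analogous to that of Lemma~\ref{lem:subj-red-det}, states that a single RMF reduction turns a derivation $D$ for $\vdash_m M:(m,m,\otyp)$ into a derivation $D'$ for its reduct with $\low^{m+1}(D')\geq\low^{m+1}(D)$, the zone structure being carried along the substitution. The new point, compared with the deterministic system, is that the top value cannot be lost: an $a$-node contributes to $\low^{m+1}$ only when $Z=m$, i.e.\ only when it lies in the top zone, and the side condition of the \AppR rule permitting at most one $k$-unbalanced child forces this top zone to be a branch. Hence when a subterm is replicated by the reduction, at most one copy keeps its nodes in the top zone, so the marked $a$-nodes are neither lost nor double-counted. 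Iterating down to $M_0$ and observing that in the normal form $\low^{m+1}(D_{M_0})$ is exactly the number of $a$-labeled nodes on the single branch forming the top zone produces a genuine branch of the tree carrying at least $\low^{m+1}(D)$ constants $a$.

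For Condition~\hyperref[it:n2pp]{(N2$''$)} I would build the derivation backwards, interleaving subject expansion with a greedy choice of zones. Starting from a branch of $M_0$ with $n$ constants $a$, I declare it the $0$-zone and set the $1$-value to $1$ at its $a$-nodes, so that $\low^1(D_{M_0})=n$. Then, for $k=1,\dots,m$, I perform two operations. First, a subject-expansion step that un-does the RMF$(k)$ reductions between $M_{k-1}$ and $M_k$: since the nodes of positive $k$-value all lie in the $(k-1)$-zone, which is a branch, each of them originates from a distinct node of $M_k$, so $\low^k$ is transferred without change. Second, once the positive-$k$-value nodes are spread over $M_k$, I choose the $k$-zone to be a branch that, starting from the root, always descends into the subtree of largest total $k$-value, and set the $(k+1)$-value to $1$ at the closest $k$-zone ancestor of every positive-$k$-value node. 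A counting argument for this greedy choice shows a loss of at most a logarithmic factor, i.e.\ $\low^k(D_{M_k})\leq 2^{\,\low^{k+1}(D_{M_k})}$. Composing the $m$ inequalities gives $n\leq\eta_m(\low^{m+1}(D))$ with $\eta_m$ the $m$-fold iterated exponential, and by construction the final derivation is a valid derivation of $\vdash_m M:(m,m,\otyp)$.

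The main obstacle lies entirely in Condition~\hyperref[it:n2pp]{(N2$''$)}, and it is precisely the point on which the naive system of Condition~\hyperref[it:n2p]{(N2$'$)} fails. Two requirements must be reconciled at every level $k$. On one hand the subject-expansion step must yield a genuine derivation of the more complex term; this fails for redexes with free variables in the system of Section~\ref{sec:deterministic}, and here it is rescued by the fact that we expand only RMF$(k)$ redexes, whose arguments contain no variables of order $k-1$, so the information about $k$-zones and $k$-values that flows through the \VarR and \AppR rules is never duplicated across the substitution. On the other hand the greedy logarithmic bound must be verified simultaneously with the side conditions of the rules, in particular that at most one child is $k$-unbalanced, which is exactly what forces each chosen $k$-zone to be a branch and thereby makes the value transfer of the next expansion step loss-free. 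Getting both requirements to hold of one and the same labeling, for all orders $k$ at once, is the technical heart of the argument.
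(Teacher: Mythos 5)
Your argument for Condition (N2$''$) is essentially the paper's: subject expansion along the RMF$(k)$ reductions transfers the $k$-value unchanged (the paper's Lemma~\ref{lem:c-step}), and the greedy choice of the $k$-zone loses at most a logarithm (Lemma~\ref{lem:increase-m}); composing the $m$ steps gives the iterated-exponential $\eta_m$. Note that in your own description the derivation held after stage $k$ carries zones and values only up to order $k$, i.e.\ it is a derivation for $\vdash_k M_k:(k,k,\otyp)$, and the zone-choosing step is what raises the level from $k-1$ to $k$. This level-raising is the only mechanism in the whole development that relates values of consecutive orders, and it is exactly what your argument for Condition (N1$''$) omits.

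Concretely, for (N1$''$) you posit a single subject-reduction lemma at the fixed level $m$: every RMF reduction, of every order, preserves or increases $\low^{m+1}$. The paper does not prove this, and its architecture is designed precisely to avoid having to. Its Lemma~\ref{lem:s-step} handles only reductions whose order exceeds the level of the derivation by exactly one; for a redex $(\lambda x.K)\,L$ of order $j\leq m$ inside a level-$m$ derivation, the truncation $\min(Z_i,\ord(L))$ in the \AppR rule and the side condition of the \VarR rule mean that the $k$-zone and $k$-value information of the argument, for $k\geq j$, is simply not recorded at the leaves for $x$, so the surgery cannot ``carry the zone structure along the substitution'': the flags $Z$ and $F$ must be recomputed throughout $K$, the $(m+1)$-value generated at the vanishing redex node has to be relocated, and it is unclear that the ``at most one $k$-unbalanced child'' constraints can still be satisfied without losing value. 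The paper instead interleaves the reductions with Lemma~\ref{lem:decrease-m}, which lowers the level of the derivation so that it always matches the current complexity of the term, at the price of passing from $\low^{i+2}$ to $\low^{i+1}$; the quantity that finally counts the constants $a$ on a branch of $M_0$ is the $1$-value of a level-$0$ derivation, reached through the chain $\low^{m+1}\leq\low^{m}\leq\dots\leq\low^{1}$. (Relatedly, your parenthetical claim that $\low^{m+1}(D_{M_0})$ equals the number of $a$-labeled nodes on the top zone is not right: $\low^{m+1}$ is also incremented at \AppR nodes where lower-order value meets the $m$-zone.) Your proof of (N1$''$) needs to be restructured around these two alternating lemmas.
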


	\begin{example}\label{ex:2}
		Let us consider the same lambda-term as in Example~\ref{ex:1}, namely $M=(\lambda y.N\,(N\,(N\,y))\,(a\,e))\,a$ with $N=\lambda y.\lambda x.y\,(y\,x)$.
		As $m$ we take its complexity, that is, $2$.
		Notice that after performing all beta-reductions of order $2$, we obtain the lambda-term $M_1=(\lambda x.N_2\,(N_2\,x))\,(a\,e)$ with $N_2=\lambda x.N_1\,(N_1\,x)$ and $N_1=\lambda x.a\,(a\,x)$.
		In this term, the $1$-zone, which has to be a branch, can descend into one of the subterms $N_2$, then into one of the subterms $N_1$, and then it can finish in one of the constants $a$.
		In effect, while typing $M$, we need two derivations for $N$, one where the lambda-term belongs to the $1$-zone, and one where it does not.
		Denote $\tau_y=(0,0,\r)\arr\r$.
		Outside of the $1$-zone, we only pass (from the argument) the information that the $1$-value is positive:
		\begin{mathpar}
		\inferrule*[Right=$(\lambda)$,leftskip=-8.5em,rightskip=-13.8em]{
			\inferrule*[Right=$(\lambda)$,leftskip=1.6em,rightskip=1.6em]{
				\inferrule*[Right=$(@)$,leftskip=6.9em,rightskip=6.9em]{
					y:(0,1,\tau_y)\vdash_2 y:(0,1,\tau_y)
				\and
					\inferrule*[right=$(@)$,leftskip=1em,rightskip=5.3em]{
						y:(0,1,\tau_y)\vdash_2 y:(0,1,\tau_y)
					\and
						x:(0,0,\r)\vdash_2 x:(0,0,\r)
					}{
						y:(0,1,\tau_y),\,x:(0,0,\r)\vdash_2 y\,x:(0,1,\r)
					}
				}{
					y:(0,1,\tau_y),\,x:(0,0,\r)\vdash_2 y\,(y\,x):(0,1,\r)
				}
			}{
				y:(0,1,\tau_y)\vdash_2\lambda x.y\,(y\,x):(0,1,\tau_y)
			}
		}{
			\vdash_2 N:(0,1,(0,1,\tau_y)\arr\tau_y)
		}
		\end{mathpar}
		Notice that in the second (i.e., lower) node using the \AppR rule, the function of type $\tau_y$, that is $(0,0,\r)\arr\r$, accepts an argument with type triple $(0,1,\r)$.
		This is correct, because according to the \AppR rule, the function receives the information only about zones and values of order not greater than the order of the argument, 
		which is $0$ in our case, and indeed we have $(\min(0,0),\min(1,0),\r)=(0,0,\r)$.
		
		Let us now see what happens inside the $1$-zone:
		\begin{mathpar}
		\inferrule*[Right=$(\lambda)$,leftskip=-6.1em,rightskip=-11.4em]{
			\inferrule*[Right=$(\lambda)$,leftskip=1.6em,rightskip=1.6em]{
				\inferrule*[Right=$(@)$,leftskip=4.5em,rightskip=4.5em]{
					y:(0,1,\tau_y)\vdash_2 y:(0,1,\tau_y)
				\and
					\inferrule*[right=$(@)$,leftskip=1em,rightskip=5.3em]{
						y:(1,1,\tau_y)\vdash_2 y:(1,1,\tau_y)
					\and
						x:(0,0,\r)\vdash_2 x:(0,0,\r)
					}{
						y:(1,1,\tau_y),\,x:(0,0,\r)\vdash_2 y\,x:(1,1,\r)
					}
				}{
					y:(0,1,\tau_y),\,y:(1,1,\tau_y),\,x:(0,0,\r)\vdash_2 y\,(y\,x):(1,2,\r)
				}
			}{
				y:(0,1,\tau_y),\,y:(1,1,\tau_y)\vdash_2\lambda x.y\,(y\,x):(1,2,\tau_y)
			}
		}{
			\vdash_2 N:(1,2,(0,1,\tau_y)\wedge(1,1,\tau_y)\arr\tau_y)
		}
		\end{mathpar}
		In the second (i.e., lower) node using the \AppR rule, the information about a positive $1$-value (coming from the left subderivation) meets the $1$-zone (coming from the right subderivation),
		and thus the $2$-value of this node is $1$.

		Denoting the type $(0,1,\tau_y)\arr\tau_y$ as $\tau^0_N$ and $(0,1,\tau_y)\wedge(1,1,\tau_y)\arr\tau_y$ as $\tau_N^1$, we continue the derivation for $M$.
		We choose to start the $2$-zone in a leaf concerning $y$.
		\begin{mathpar}
		\inferrule*[right=$(@)$]{
			\vdash_2 N:(1,2,\tau_N^1)
		\and
			y:(0,1,\tau_y)\vdash_2 y:(0,1,\tau_y)
		\and
			y:(1,1,\tau_y)\vdash_2 y:(2,1,\tau_y)
		}{
			y:(0,1,\tau_y),\,y:(1,1,\tau_y)\vdash_2 N\,y:(2,2,\tau_y)
		}
		\end{mathpar}
		This results in having a node with $3$-value $1$.
		As we want to continue in the same way with $N\,(N\,y)$ and $N\,(N\,(N\,y))$, we need to derive $(0,1,\tau_y)$ for $N\,y$ and $N\,(N\,y)$
		(which describes the situation outside of the $1$-zone):
		\begin{mathpar}
		\inferrule*[Right=$(@)$,rightskip=-5.3em]{
			\vdash_2 N:(0,1,\tau_N^0)
		\and
			\inferrule*[right=$(@)$,leftskip=1em,rightskip=5.3em]{
				\vdash_2 N:(0,1,\tau_N^0)
			\and
				y:(0,1,\tau_y)\vdash_2 y:(0,1,\tau_y)
			}{
				y:(0,1,\tau_y)\vdash_2 N\,y:(0,1,\tau_y)
			}
		}{
			y:(0,1,\tau_y)\vdash_2 N\,(N\,y):(0,1,\tau_y)
		}
		\end{mathpar}
		We continue as follows, obtaining two more nodes with $3$-value $1$:
		\begin{mathpar}
		\inferrule*[right=$(@)$]{
			\vdash_2 N:(1,2,\tau_N^1)
		\and
			y:(0,1,\tau_y)\vdash_2 N\,y:(0,1,\tau_y)
		\and
			y:(0,1,\tau_y),\,y:(1,1,\tau_y)\vdash_2 N\,y:(2,2,\tau_y)
		}{
			y:(0,1,\tau_y),\,y:(1,1,\tau_y)\vdash_2 N\,(N\,y):(2,2,\tau_y)
		}
		\and
		\inferrule*[right=$(@)$]{
			\vdash_2 N:(1,2,\tau_N^1)
		\and\hspace{-10.4pt}
			y:(0,1,\tau_y)\vdash_2 N\,(N\,y):(0,1,\tau_y)
		\and\hspace{-10.4pt}
			y:(0,1,\tau_y),\,y:(1,1,\tau_y)\vdash_2 N\,(N\,y):(2,2,\tau_y)
		}{
			y:(0,1,\tau_y),\,y:(1,1,\tau_y)\vdash_2 N\,(N\,(N\,y)):(2,2,\tau_y)
		}
		\end{mathpar}
		Next we apply the argument $a\,e$, obtaining one more node with $3$-value $1$:
		\begin{mathpar}
		\inferrule*[Right=$(@)$,leftskip=0em,rightskip=-5.5em]{
			y:(0,1,\tau_y),\,y:(1,1,\tau_y)\vdash_2 N\,(N\,(N\,y)):(2,2,\tau_y)
		\and
			\inferrule*[right=$(@)$,leftskip=1em,rightskip=5.5em]{
				\vdash_2 a:(0,1,\tau_y)
			\and
				\vdash_2 e:(0,0,\r)
			}{
				\vdash_2 a\,e:(0,1,\r)
			}
		}{
			y:(0,1,\tau_y),\,y:(1,1,\tau_y)\vdash_2 N\,(N\,(N\,y))\,(a\,e):(2,2,\tau_y)
		}
		\end{mathpar}
		In the last part of the derivation we also have a node with $3$-value $1$:
		\begin{mathpar}
		\inferrule*[right=$(@)$]{
			\inferrule*[right=$(\lambda)$,rightskip=1em]{
				y:(0,1,\tau_y),\,y:(1,1,\tau_y)\vdash_2 N\,(N\,(N\,y))\,(a\,e):(2,2,\tau_y)
			}{
				\vdash_2\lambda y.N\,(N\,(N\,y))\,(a\,e):(2,2,(0,1,\tau_y)\wedge(1,1,\tau_y)\arr\r)
			}
		\and
			\vdash_2 a:(0,1,\tau_y)
		\and
			\vdash_2 a:(1,2,\tau_y)
		}{
			\vdash_2 M:(2,2,\r)
		}
		\end{mathpar}
		
		As in Example~\ref{ex:1}, the total $3$-value of the derivation is $5$, and by adding any further $N$ to the sequence $N\,(N\,(N\,y))$, we can increase the $3$-value by $1$.
	\end{example}

	\begin{example}\label{ex:br}
		Let us also illustrate on a very simple example how the rule for the constant $b$ behaves:
		\begin{mathpar}
		\inferrule*[right=$(@)$,leftskip=-5.8em,rightskip=-7.7em]{
			\inferrule*[right=$(@)$,leftskip=5.8em,rightskip=7.7em]{
				\vdash_2 b:(0,0,(0,0,\r)\to\top\to\r)
			\and
				\vdash_2 M:(2,2,\r)
			}{
				\vdash_2 b\,M:(2,2,\top\to\r)
			}
		}{
			\vdash_2 b\,M\,e:(2,2,\r)
		}
		\end{mathpar}
		We thus simply ignore one of the arguments of $b$.
		Notice that in the second use of the application rule does not require any subderivations for the argument.
	\end{example}

%	In this rule, it is allowed (but in fact useless) that for two different $i\in I$ the type triples $(m,F_i,M_i,\tau_i)$ are equal.
%	It is also allowed that $I=\emptyset$, in which case no type needs to be derived for $Q$.

	\paragraph*{Proofs.}
	Let us now sketch the proof of Theorem~\ref{thm:nondet}.
	Condition~\hyperref[it:n1pp]{(N1$''$)} is based on the following two lemmata.

	\begin{lemma}\label{lem:decrease-m}
		Let $M$ be a closed lambda-term of sort $\otyp$ and complexity at most $m+1$.
		If $D_{m+1}$ is a derivation for $\vdash_{m+1} M:(m+1,m+1,\otyp)$, then there exists a derivation $D_m$ for $\vdash_m M:(m,m,\otyp)$
		with $\low^{m+1}(D_m)\geq\low^{m+2}(D_{m+1})$.
	\end{lemma}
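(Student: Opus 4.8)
The plan is to obtain $D_m$ from $D_{m+1}$ by a single \emph{capping} operation that lowers every zone order and productivity order from the range $\{0,\dots,m+1\}$ to $\{0,\dots,m\}$. Concretely, I replace each type triple $(Z,F,\tau)$ occurring as the \emph{conclusion} of a node by $(\min(Z,m),\min(F,m),\tau)$, and recompute the flags and values according to the $m$-indexed rules. The key preliminary observation is that complexity at most $m+1$ forces every argument $L$ of every application to satisfy $\ord(L)\le m$; hence every environment triple and every triple occurring \emph{inside} a type lives in some $\Ttrip^\alpha_{\ord(\alpha)}$ with $\ord(\alpha)\le m$ and therefore already has $Z,F\le m$. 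Capping thus touches only the outermost $Z$ and $F$ of the conclusion triples lying on the ``spine'' of the derivation (those that may reach $m+1$); all inner data is untouched, and the root triple $(m+1,m+1,\otyp)$ becomes exactly the required $(m,m,\otyp)$.

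Next I verify that the capped tree $D_m$ is a legal derivation of the $m$-indexed system. Since only outermost components change and all sub-triples are fixed, $k$-balancedness is unchanged for every $k\le m$; consequently the side condition of \AppR (at most one $k$-unbalanced child, $k\in\{0,\dots,m\}$) and the unbalanced-selection condition of \LamR transfer verbatim from $D_{m+1}$, and the \VarR side condition survives because $\ord(x)\le m$. It remains to check that each capped conclusion coincides with what the $m$-rules produce; using $\ord(L)\le m$ one sees that the information $(\min(Z_i,\ord(L)),\min(F_i,\ord(L)),\tau_i)$ passed to $K$ is cap-invariant, so the only delicate point is an \AppR node whose meeting condition $\ord(L)\le Z_i<F_i\le Z$ is destroyed by capping. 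This happens exactly when $Z=m+1$ and every witnessing child has $F_i=m+1$ (hence $Z_i=m$); there the $m$-rule sees no meeting and sets $F=\max_i\min(F_i,m)=m=\min(F,m)$, so the conclusion triple still matches and $D_m$ is valid (although the node loses its value --- this is the crux handled below).

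For the value inequality I factor through $\low^{m+1}(D_{m+1})$. First, capping never decreases the $(m+1)$-value node by node: an $a$-node has $(m+1)$-value $1$ iff $Z\ge m$, which the cap preserves; and for an \AppR node with $(m+1)$-value $1$ the smallest witnessing $F_i$ is $\le m$, so it survives the cap and the meeting persists. Hence $\low^{m+1}(D_m)\ge\low^{m+1}(D_{m+1})$. Second, inside $D_{m+1}$ I claim $\low^{m+2}(D_{m+1})\le\low^{m+1}(D_{m+1})$. Every node with positive $(m+2)$-value satisfies $Z=m+1$ and is either an $a$-node (which then also has $(m+1)$-value $1$, and I map it to itself) or an \AppR node triggered by a witnessing child $i$ with $F_i=m+1$ and $Z_i\le m$; such a child is entirely outside the $(m+1)$-zone, so the node is the closest $(m+1)$-zone ancestor of every node of positive $(m+1)$-value inside that child, and I map it to one such node. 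Combining the two inequalities gives $\low^{m+1}(D_m)\ge\low^{m+1}(D_{m+1})\ge\low^{m+2}(D_{m+1})$, as required.

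The two facts the argument rests on, and the main obstacle, are: (i) the auxiliary invariant that a node whose productivity order equals $m+1$ has a descendant with $(m+1)$-value $1$ --- proved by induction on the derivation, the only subtle point being that \VarR nodes can carry $F\le\ord(x)\le m$ only, so the ``value comes from a future substitution'' phenomenon never reaches order $m+1$; and (ii) the injectivity of the map in the second step. Injectivity holds because distinct $(m+1)$-zone nodes are closest ancestors of disjoint sets of positive-$(m+1)$-value nodes, and because an $a$-node image lies in the $(m+1)$-zone whereas an \AppR-node image lies strictly below it (inside a witnessing child, off the $(m+1)$-zone), so the two kinds of images never collide. I expect the bookkeeping in (ii), together with the meeting-collapse case of the validity check, to be where the real work lies; the capping itself and the structural rule checks are routine once the ``only outermost components move'' observation is in place.
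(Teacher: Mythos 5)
Your proposal follows essentially the same route as the paper: the paper likewise bounds $\low^{m+2}(D_{m+1})$ by $\low^{m+1}(D_{m+1})$ via the observation that each node with positive $(m+2)$-value is the closest $(m+1)$-zone ancestor of some node with positive $(m+1)$-value (uniqueness of closest ancestors giving injectivity), and then obtains $D_m$ by forgetting the $(m+1)$-zone and the $(m+2)$-values --- which is exactly your capping operation. One small slip: your case split for \AppR nodes with positive $(m+2)$-value is not exhaustive, since the witnessing child need only satisfy $F_i\le m+1$ rather than $F_i=m+1$; in the missing case $F_i\le m$ the node itself also has positive $(m+1)$-value and lies in the $(m+1)$-zone, so mapping it to itself (as you already do for $a$-nodes) repairs the argument without disturbing your injectivity reasoning.
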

	
	\begin{lemma}\label{lem:s-step}
		Let $M$ be a homogeneous and closed lambda-term of sort $\otyp$, and let $M\redb N$ be an RMF$(m+1)$ reduction.
		If $D$ is a derivation for $\vdash_m M:(m,m,\otyp)$, then there exists a derivation $E$ for $\vdash N:(m,m,\otyp)$ with the same $(m+1)$-value.
	\end{lemma}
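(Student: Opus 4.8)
The plan is to prove the lemma by a local surgery on the derivation, in the spirit of the subject-reduction argument for Lemma~\ref{lem:subj-red-det}, but now tracking the $(m+1)$-value \emph{exactly} rather than up to a bound. Let $(\lambda x.K)\,L$ be the redex contracted in $M\redb N$, and let $D_0$ be the subderivation of $D$ whose root concludes $\Gamma\vdash_m(\lambda x.K)\,L:(Z,F,\tau)$. Since the step is RMF$(m+1)$ and $M$ is homogeneous, we have $\ord(x)=\ord(L)=m$ and every free variable of $L$ has order at most $m-1$. I would first record two consequences of $\ord(L)=m$. In the \AppR rule at the root of $D_0$ the clamping $\min(\cdot,\ord(L))$ is vacuous, since all zone and productivity orders are already $\le m=\ord(L)$; hence the triples requested for $x$ in the \LamR rule just above are exactly the triples $(Z_i,F_i,\tau_i)$ that the premises of $D_0$ derive for $L$. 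Moreover, in the \VarR rule a leaf using $x$ cannot raise its zone order: the second disjunct $Z'\ge\ord(x)=Z$ forces $Z=\ord(x)=m$ and then $Z'\le m$ gives $Z'=Z$, so in all cases the leaf derives precisely $(Z_i,F_i,\tau_i)$.

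The surgery then proceeds as in the deterministic case. I take the subderivation for $K$ sitting above the \LamR rule and replace each leaf deriving $x:(Z_i,F_i,\tau_i)$ by a copy of the premise of $D_0$ deriving $L:(Z_i,F_i,\tau_i)$, afterwards recomputing the type environments by unions while leaving the type triple at every surviving node of $K$ untouched. Because the conclusion triple of each inserted derivation for $L$ coincides with the triple at the replaced leaf, every \AppR and \LamR rule along $K$ still fits; in particular the side condition that at most one child is $k$-unbalanced concerns only the children's triples, which are unchanged. Balanced triples in $T\setminus T'$ that $K$ never uses are dropped together with their derivations for $L$. The resulting derivation $E$ has the contractum's root triple $(Z_0,F_0,\tau)$ in place of $(Z,F,\tau)$; since $Z_0\le Z$ and $F_0\le F$, and a strict drop is possible only when $Z<m$ (when $Z=m$ a used argument or the function already forces $Z_0=m$), the discrepancy occurs only off the $m$-zone, so the root judgement remains $\vdash_m N:(m,m,\otyp)$.

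It remains to see that $\low^{m+1}$ is preserved. The \LamR node and the redex \AppR node disappear, and neither carries $(m+1)$-value: the \LamR rule assigns value $0$, and for the redex \AppR node the positivity condition $\ord(L)\le Z_i<F_i\le Z$ is unsatisfiable, since $\ord(L)=m$ forces $Z_i=m$ and then $F_i>m$ is impossible. Next, every node of an inserted derivation for $L$ has zone order at most the zone order $Z_i$ of its root, and an $(m+1)$-value can be positive only at a node with zone order $m$ (both for the constant $a$ and in the \AppR rule); hence a derivation for $L$ with $Z_i<m$ contributes no $(m+1)$-value and may be freely duplicated. Finally, because $L$ has sort of order $m$, every argument type of $L$ has order $\le m-1$, so every argument triple occurring in $\tau_i$ has zone order $<m$ and is $m$-balanced; thus $(Z_i,F_i,\tau_i)$ is $m$-unbalanced exactly when $Z_i=m$, and the \AppR side condition allows at most one premise among the function and the $(L_i)_{i\in I}$ to be $m$-unbalanced. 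So at most one derivation for $L$ has $Z_i=m$, i.e.\ at most one may carry $(m+1)$-value.

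The crux, and the step I expect to be the main obstacle, is to show that this single $m$-unbalanced triple for $x$ is used in \emph{exactly one} leaf of the derivation for $K$, so that the only $(m+1)$-bearing derivation for $L$ is never duplicated. This is the formal content of ``the $m$-zone has a unique leaf''. I would prove, by induction on the derivation for $K$, that the conclusions that are $m$-unbalanced and reachable from an $m$-unbalanced use of $x$ lie on a single root-to-leaf path, the branching being controlled by the \AppR constraint. The delicate point is that an intervening \LamR rule binding a variable of order $m$ can turn a conclusion $m$-balanced while passing the ``$m$-unbalanced token'' to that variable; one then has to follow the token to the place where the bound variable is later applied and continue the argument there. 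Granting this uniqueness, the derivations for $L$ with $Z_i<m$ each contribute $0$, the one with $Z_i=m$ is inserted exactly once, and every surviving node of $K$ keeps its $(m+1)$-value, whence $\low^{m+1}(E)=\low^{m+1}(D)$.
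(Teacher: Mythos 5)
Your proposal follows the same route as the paper: a surgery in the style of Section~\ref{sec:deterministic}, with the observation that the argument subderivations that may be dropped or duplicated are exactly those carrying no $(m+1)$-value, while the one that does carry $(m+1)$-value must be moved to a single place. Your local bookkeeping is in fact more explicit than the paper's sketch and is correct: the vacuity of the $\min(\cdot,\ord(L))$ clamping, the fact that \VarR cannot raise the zone order of an order-$m$ variable, the vanishing of the $k$-values at the two deleted nodes, the equivalence ``$m$-unbalanced $\Leftrightarrow Z_i=m$'' for triples of $L$, and the ``at most one $m$-unbalanced premise'' consequence of the \AppR side condition all check out.

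Two points, however, mark the distance to a complete argument. First, the claim you explicitly leave as ``granting this uniqueness'' --- that the single $m$-unbalanced binding for $x$ is consumed at exactly one \VarR leaf of the derivation for $K$ --- is the actual content of the lemma; the paper asserts the corresponding fact (``for every $k$ the surgery wants to move at most one subderivation deriving a $k$-unbalanced type triple'') without proof either, so you are not behind the survey here, but be aware that the invariant needed to discharge it is precisely the balancedness bookkeeping: one shows that a subderivation whose conclusion is $k$-balanced has only $k$-balanced bindings in its environment, and that the \AppR side condition then forces the $k$-unbalanced occurrences to lie on a single path. Second, and related, you recompute environments ``by unions'' and drop the unused balanced premises for $L$ without checking that this stays consistent with the rest of $D$: removing such a premise removes its $\Gamma_i$ from the environment of the redex node, so bindings disappear from environments above, and one needs the observation just mentioned (only \emph{balanced} bindings can disappear, and the \LamR rule permits dropping exactly those) to keep the derivation above well-formed. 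For the same reason your claim that the conclusion triple $(Z,F,\tau)$ of the redex node survives as $(Z_0,F_0,\tau)$ ``only off the $m$-zone, so nothing changes'' is too quick: a dropped balanced premise with large $F_i$ can lower the $F$ of the conclusion, and this change propagates into the $k$-value condition $\ord(L')\leq Z_j<F_j\leq Z$ at \AppR nodes above the redex, so one must argue separately that no $(m+1)$-value above the redex is created or destroyed by this repair. These are gaps relative to a full proof, though comparable in size to what the paper itself leaves implicit.
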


	Condition~\hyperref[it:n2pp]{(N2$''$)} is based on two symmetric lemmata.

	\begin{lemma}\label{lem:c-step}
		Let $M$ be a homogeneous and closed lambda-term of sort $\otyp$, and let $M\redb N$ be an RMF$(m+1)$ reduction.
		If $E$ is a derivation for $\vdash_m N:(m,m,\otyp)$, then there exists a derivation $D$ for $\vdash M:(m,m,\otyp)$ with the same $(m+1)$-value.
	\end{lemma}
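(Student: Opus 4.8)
The plan is to prove Lemma~\ref{lem:c-step} as a subject-expansion statement that mirrors exactly the subject-reduction Lemma~\ref{lem:s-step}; concretely I would run backwards the same surgery used for the deterministic Lemma~\ref{lem:subj-exp}, now carrying the extra bookkeeping of zones and of the values $\low^k$. Since $M\redb N$ contracts a single redex $(\lambda x.K)\,L$, the given derivation $E$ for $N$ contains exactly one subderivation $E'$ typing the contractum $K[L/x]$, and everything strictly above $E'$ is context that I leave untouched. It therefore suffices to rebuild $E'$ into a derivation $D'$ that derives the same type triple for $(\lambda x.K)\,L$ and has the same total $\low^{m+1}$, and then splice $D'$ back in. Because $M$ is closed and of sort $\otyp$, the root judgement has empty environment, so even if the rebuilt $D'$ were to use a smaller environment than $E'$ (the phenomenon noted after Lemma~\ref{lem:subj-exp}), this would only shrink environments along the path to the root and would affect neither the derived judgement $\vdash_m M:(m,m,\otyp)$ nor any value.

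The surgery itself is the inverse substitution. In $E'$ I locate the maximal subderivations typing the copies of $L$ that sit at the former occurrences of $x$, and replace each of them by a \VarR leaf for $x$ carrying the matching type triple; this yields a derivation of $K$ whose environment requests, for $x$, exactly the set $T'$ of triples read off those leaves. Wrapping it with \LamR produces $\lambda x.K:(Z,F,\bigwedge T\arr\tau)$, where the restricted weakening $\{\hat\sigma\in T\mid\hat\sigma\text{ unbalanced}\}\subseteq T'\subseteq T$ lets me re-attach, through \AppR, one premise deriving $L$ for each needed triple, recovering the environment $\Gamma$ of $E'$. Two observations keep this easy. First, $\ord(L)=m$ (the redex is of order $m+1$, so $x$ and $L$ are of order $m$) and all triples are bounded by $m$, so the truncations $\min(Z_i,\ord(L))$ and $\min(F_i,\ord(L))$ in \AppR are the identity: the full zone and productivity information of each copy of $L$ is passed to $K$, matching precisely what the \VarR leaves expect, as dictated by the side condition of \VarR for a variable of order $\ord(x)=m$. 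Second, the freshly created \LamR, \VarR and \AppR nodes all carry $\low^{m+1}=0$; indeed the only way an \AppR node gains a positive $(m+1)$-value is through a child with $\ord(L)\le Z_i<F_i\le Z=m$, which is impossible when $\ord(L)=m$ since then $F_i>Z_i\ge m$ contradicts $F_i\le m$.

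The heart of the argument is that no $(m+1)$-value is lost when the several copies of $L$ are gathered under the single application of $(\lambda x.K)$ to $L$. Here I use the same single-branch observation that underlies the transfer of the $1$-value across RMF$(1)$ reductions in the proof sketch preceding Theorem~\ref{thm:nondet}. Any node with positive $(m+1)$-value necessarily lies in the $m$-zone (both the $\inc$-rule and the value-triggering \AppR require $Z=m$), and since the zone order never decreases towards the root, the whole subderivation hanging from such a node has its root in the $m$-zone too. The \AppR condition forbidding two $k$-unbalanced children for any $k$ forces the $m$-zone of $E$ to be a single branch, so there is a unique $m$-zone leaf; consequently at most one of the collected premises is $m$-unbalanced, and more generally, for every $k$, at most one is $k$-unbalanced. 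Thus the $k$-unbalancedness constraint of the reassembled \AppR is inherited from the validity of $E$, and I can keep every copy of $L$ (duplicate balanced triples are permitted as separate premises). Since the $K$-part retains all of its internal $(m+1)$-values unchanged, this yields $\low^{m+1}(D')=\low^{m+1}(E')$, hence $\low^{m+1}(D)=\low^{m+1}(E)$.

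I expect the main obstacle to be making this last step genuinely \emph{exact}, rather than the one-sided bound that sufficed in the deterministic Lemma~\ref{lem:subj-red-det}. The delicate point is the faithful transfer of zone data between a copy of $L$ (for orders $k\le\ord(L)=m$ it is recorded in the $x$-binding) and the \VarR leaf that replaces it, together with the verification that the reassembled \AppR still meets the $\le 1$ $k$-unbalanced bound for every $k$ and that the derived triple is unchanged. This requires a careful induction reconciling three syntactic notions---membership of a node in the $m$-zone, $k$-unbalancedness of a type triple, and positivity of $\low^{m+1}$ within a subderivation---and in particular showing that the value-bearing copies of the order-$m$ argument are exactly those preserved by \AppR. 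Once this reconciliation is in place the remaining checks are routine but numerous, and that is where most of the real work of the full proof lies.
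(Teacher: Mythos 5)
Your proposal follows essentially the same route as the paper's proof: the inverse-substitution surgery, with the balanced/unbalanced dichotomy guaranteeing that the at-most-one-$k$-unbalanced side condition of \AppR can be met and that any copies of $L$ that could be merged or dropped are balanced and hence carry no $(m+1)$-value, so the value is preserved exactly; your additional observations that the truncations $\min(\cdot,\ord(L))$ are the identity and that the freshly created \AppR node cannot acquire a positive $(m+1)$-value are correct and implicit in the paper. One caution: your aside that a shrunken environment for $D'$ ``would affect neither the derived judgement nor any value'' because $M$ is closed is not sound in general---dropping an unbalanced binding would violate the \LamR side condition higher up, which is exactly why the paper records that balanced subderivations have type environments containing only balanced bindings---but since your main construction keeps every copy of $L$ and therefore preserves the environment exactly, this aside is not actually needed.
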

	
	\begin{lemma}\label{lem:increase-m}
		If $D_m$ is a derivation for $\vdash_m M:(m,m,\otyp)$, then there exists a derivation $D_{m+1}$ for $\vdash_{m+1} M:(m+1,m+1,\otyp)$
		with $\low^{m+2}(D_{m+1})\geq\log_3(\low^{m+1}(D_m))$.
	\end{lemma}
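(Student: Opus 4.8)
The plan is to regard Lemma~\ref{lem:increase-m} as an approximate inverse of Lemma~\ref{lem:decrease-m}: whereas decreasing the level discards the topmost zone and top value for free, increasing it must \emph{invent} a new zone of order $m+1$ together with a new family of $(m+2)$-values, and the loss incurred by this choice is exactly the logarithm in the statement. First I would observe that $D_m$ is already almost a level-$(m+1)$ derivation: every type triple of $D_m$ uses $Z,F\in\{0,\dots,m\}\subseteq\{0,\dots,m+1\}$ and still satisfies $F\le Z+1$, and every side condition of the rules that does not mention order $m+1$ is inherited verbatim. So the whole task reduces to adding the new zone and recomputing the productivity orders and values at the top.

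Next I would locate the mass to be consolidated. Inspecting the value clauses of the $\inc$-rule and the application rule, a node can have positive $(m+1)$-value only if its zone order is $Z=m$; hence all nodes contributing to $\low^{m+1}(D_m)$ lie inside the $m$-zone, which (zone membership propagates to the parent, so this set is downward closed towards the conclusion) forms a connected subtree of $D_m$ rooted at the conclusion. I weight each node of this subtree by its $(m+1)$-value, so that the total weight is $\low^{m+1}(D_m)$. The new $(m+1)$-zone is then chosen as a single root-to-leaf branch $B$ of this subtree, selected \emph{greedily}: at every node I descend into the child whose subtree carries the largest remaining weight. Promoting the nodes of $B$ to zone order $m+1$ (and leaving all other data untouched) keeps the derivation well formed, since $B$ is a single branch and therefore at most one child of any node is $(m+1)$-unbalanced, which is the only new instance of the side condition of the application rule.

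It then remains to recompute the top data. The productivity orders that the $\inc$- and application rules capped at $m$ (precisely at the weighted nodes) are released to $m+1$, which is legal exactly because those nodes have $Z=m$ and hence $F\le Z+1=m+1$; the conclusion thereby becomes $\vdash_{m+1}M:(m+1,m+1,\otyp)$. By the value clause of the application rule, a node of $B$ acquires $(m+2)$-value $1$ precisely where positive $(m+1)$-weight meets the branch $B$, i.e.\ at each $\inc$-node on $B$ and at each application node on $B$ from which weighted mass branches off; so $\low^{m+2}(D_{m+1})$ equals the number of such junctions on $B$.

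The hard part is the quantitative estimate, and this is where the greedy choice earns its keep. The weighted subtree branches only in an essentially binary way --- the generated tree branches solely at the binary constructor $\nd$, and the ``at most one $k$-unbalanced child'' conditions force the counted mass to split along at most one auxiliary direction at a time --- so at each junction the weight carried onto $B$ is shared among at most the continued subtree, one sibling direction, and the node's own unit. Greedy descent therefore retains at least a third of the weight per junction (once the weight is at least $3$), so after $v$ junctions the weight surviving on $B$ is at least $\low^{m+1}(D_m)/3^v$; as this never drops below $1$, we get $v\ge\log_3(\low^{m+1}(D_m))$, that is $\low^{m+2}(D_{m+1})\ge\log_3(\low^{m+1}(D_m))$. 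I expect the genuine difficulty to be justifying the binary-splitting claim rigorously from the unbalancedness conditions, so that the base $3$ is correct, together with the bookkeeping that the recomputed application-values fire exactly at the greedy junctions and nowhere else.
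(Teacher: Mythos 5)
Your proposal follows essentially the same route as the paper: you add the $(m+1)$-zone as a branch chosen by greedily descending into the child subderivation carrying the largest total $(m+1)$-value, and you bound the loss per step by a three-way split (the continued child, at most one $m$-unbalanced sibling --- since $m$-balanced argument subderivations carry no $(m+1)$-value --- and the node's own unit), yielding the $\log_3$ bound. The one step you flag as delicate, justifying the base $3$ from the unbalancedness conditions at application nodes, is exactly the observation the paper's sketch supplies, so the proposal is correct and matches the intended proof.
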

	
	Theorem~\ref{thm:nondet} is easily implied.
	Indeed, consider a homogeneous and closed lambda-term $M_m=M$ of sort $\otyp$ and complexity at most $m$, 
	its normal-form $M_0$, and lambda-terms $M_{m-1},M_{m-2},\dots,M_1$ such that all reductions between $M_i$ and $M_{i-1}$ are RMF$(i)$.

	In Condition~\hyperref[it:n1pp]{(N1$''$)} we start with a derivation $D_m$ for $\vdash_m M_m:(m,m,\r)$.
	Then, repeatedly for every $i=m-1,m-2,\dots,0$ we first apply Lemma~\ref{lem:decrease-m} to $D_{i+1}$ (with conclusion $\vdash_{i+1} M_{i+1}:(i+1,i+1,\otyp)$) 
	obtaining a derivation $D_i'$ for $\vdash_i M_{i+1}:(i,i,\otyp)$ with $(i+1)$-value not smaller than the $(i+2)$-value of $D_{i+1}$,
	and next we apply Lemma~\ref{lem:s-step} to every RMF$(i)$-reduction between $M_{i+1}$ and $M_i$,
	obtaining a derivation $D_i$ for $\vdash_i M_i:(i,i,\otyp)$ with the same $(i+1)$-value as $D_i'$.
	In effect, we obtain a derivation $D_0$ for $\vdash_0 M_0:(0,0,\r)$ with $1$-value not smaller than the $(m+1)$-value of the original derivation $D_m$.
	We conclude by observing that $D_0$ simply follows some branch of $M_0$, and that its $1$-value equals the number of constants $a$ on that branch.

	Conversely, while proving Condition~\hyperref[it:n2pp]{(N2$''$)}, at the beginning we construct a derivation $D_0$ for $\vdash_0 M_0:(0,0,\r)$, following some branch of $M_0$;
	the $1$-value of this derivation equals the number of constants $a$ on the considered branch.
	Then, repeatedly for every $i\in\{0,\dots,m-1\}$ we first apply Lemma~\ref{lem:c-step} for every RMF$(i)$-reduction between $M_{i+1}$ and $M_i$, 
	obtaining a derivation $D_i'$ for $\vdash_i M_{i+1}:(i,i,\otyp)$ with the same $(i+1)$-value as $D_i$,
	and next we apply Lemma~\ref{lem:increase-m} to $D_i'$ obtaining a derivation $D_{i+1}$ for $\vdash_{i+1} M_{i+1}:(i+1,i+1,\otyp)$ with $(i+2)$-value at most logarithmically smaller
	than the $(i+1)$-value of $D_i$.
	In effect, we obtain a derivation $D_m$ for $\vdash_m M_m:(m,m,\r)$ with $(m+1)$-value dominating the number of constants $a$ on the selected branch of the beta-normal form $M_0$.

	It remains to prove the lemmata.
	In Lemma~\ref{lem:decrease-m} we are given a derivation $D_{m+1}$ (of order $m+1$) concerning a lambda-term of complexity at most $m+1$.
	In such a derivation, a node has positive $(m+2)$-value (equal $1$) 
	if it is the closest ancestor of a node with positive $(m+1)$-value that is in the $(m+1)$-zone 
	(because all variables are of order at most $m$, the information about positive $(m+1)$-values is not passed through type environments).
	Of course every node has only one closest ancestor that is in the $(m+1)$-zone, thus the total $(m+2)$-value is not greater than the total $(m+1)$-value.
	Having this, we decrease the order of the derivation to $m$, by simply forgetting about $(m+2)$-values and about the $(m+1)$-zone; the total $(m+1)$-value remains unchanged.

	Lemmata~\ref{lem:s-step} and~\ref{lem:c-step} can be shown by performing appropriate surgeries on the derivations, like in Section~\ref{sec:deterministic}.
	One has to observe there that if a subderivation (for a lambda-term of order $m$) derives a balanced type triple, 
	then its $(m+1)$-value is $0$, and its type environment can contain only bindings with balanced type triples.
	In effect, we can treat subderivations deriving balanced and unbalanced type triples differently.
	Namely, subderivations deriving balanced type triples can be harmlessly removed or duplicated.
	Indeed, on the one hand, these operations do not change the total $(m+1)$-value.
	On the other hand, while removing such a subderivation, only bindings with balanced type triples are removed from type environments;
	this does not cause problems in nodes using the \LamR rule, because this rule allows to drop some balanced type triples.
	On the other hand, for every $k$ the surgery wants to move at most one subderivation deriving a $k$-unbalanced type triple,
	so no removal or duplication is needed for such subderivations. 
	
	In Lemma~\ref{lem:increase-m}, we have to add an $(m+1)$-zone to a derivation of order $m$.
	Starting from the root of the derivation, we repeatedly descend to the subderivation in which the total $(m+1)$-value is the greatest (arbitrarily in the case of a tie);
	the branch created in this way is taken as the $(m+1)$-zone.
	
	If, while descending from some node to its child, the total $(m+1)$-value decreases 
	(i.e., either the node itself has $(m+1)$-value $1$, or a subderivation starting in some other child also has a positive $(m+1)$-value),
	then the node gets positive $(m+2)$-value: it is the closest ancestor of some node with positive $(m+1)$-value that is in the $(m+2)$-zone.
	This can happen only in the case of the \AppR rule.
	In the \AppR rule one may observe that if a subderivation derives an $m$-balanced type triple for the argument, then its total $(m+1)$-value is $0$.
	We can thus have at most two subderivations (among those starting in children) with positive $(m+1)$-value: one for the operand, and one concerning an $m$-unbalanced type triple for the argument.
	In consequence, while descending to a subderivation, the total $(m+1)$-value decreases at most three times (with the exception that it can decrease from $1$ to $0$).
	It follows that the total $(m+2)$-value is at least logarithmic in the total $(m+1)$-value.

	\paragraph*{Extension to Recursion Schemes.} 
	
	Theorem~\ref{thm:nondet} can be also stated for infinite lambda-terms (hence, in particular, for regular infinite lambda-terms represented in a finite way by recursion schemes).
	The assumption is that we consider there only finite type derivations, and only finite branches of the generated tree (i.e., branches ending in a leaf).
	Notice that a type derivation for an infinite lambda-term can be finite, because a derivation does not need to descend to every subterm of the lambda-term.
	We claim that, under these assumptions, Theorem~\ref{thm:nondet} is true for infinite lambda-terms.

	To see this, consider a new constant $\bot$ of sort $\otyp$; it differs from $\zero$ in that we do not have a typing rule for $\bot$.
	A \emph{cut} of a lambda-term $M$ is a lambda-term obtained from $M$ by replacing some its subterms with lambda-terms of the form $\lambda x_1\lamdots\lambda x_k.\bot$ 
	(the number of the variables and their sorts are chosen so that the sort of the subterm does not change).
	It is easy to see that there is a finite derivation for $\vdash_m M:\hat\tau$ if and only if for some its finite cut $M'$ there is a derivation for $\vdash_m M':\hat\tau$, having the same values
	(we can cut off subterms not involved in the derivation).
	Likewise, the tree generated by a closed lambda-term $M$ of sort $\otyp$ contains some finite branch $B$, if and only if the tree generated by some finite cut $M'$ of $M$ contains the same branch $B$
	(the finite branch is generated after finitely many beta-reductions, concerning only a top part of $M$, and subterms located deeper in $M$ can be cut off).
	This way, the infinitary version of Theorem~\ref{thm:nondet} can be reduced to the original statement concerning finite lambda-terms.
	
	Because in a single infinite tree we can have branches with arbitrarily many constants $a$, it makes sense to give the following direct corollary of Theorem~\ref{thm:nondet}.

	\begin{corollary}\label{cor:diag}
		The following conditions are equivalent for a homogeneous and closed (potentially infinite) lambda-term $M$ of sort $\otyp$:
		\begin{compactitem}
		\item
			for every $n\in\Nat$, in the tree generated by $M$ there exists a branch with at least $n$ appearances of the constant $a$, and
		\item
			for every $n\in\Nat$, there exists a derivation for $\vdash_m M:(m,m,\r)$ with $(m+1)$-value at least $n$.
		\end{compactitem}
	\end{corollary}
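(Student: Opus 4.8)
The plan is to derive both implications directly from Theorem~\ref{thm:nondet}, invoked in its infinitary form. As remarked just above the corollary, the theorem remains valid for a (possibly infinite) homogeneous closed lambda-term $M$ of sort $\otyp$ once we restrict attention to finite type derivations and to finite branches of the generated tree; the cut argument reduces each such instance to the finite case. Throughout, let $m$ be the bound on the complexity of $M$ used in the judgments $\vdash_m M:(m,m,\r)$. Both bullet points of the statement are unboundedness assertions, so the corollary merely says that the maximal number of constants $a$ per branch and the supremum of $(m{+}1)$-values $\low^{m+1}(D)$ over all derivations $D$ for $\vdash_m M:(m,m,\r)$ are finite or infinite simultaneously; this is exactly the relation $\approx$ specialised to the present pair of functions.

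For the implication from the second condition to the first I would argue directly, using only Condition~\hyperref[it:n1pp]{(N1$''$)}. Fix $n\in\Nat$ and take, by the second condition, a derivation $D$ for $\vdash_m M:(m,m,\r)$ with $\low^{m+1}(D)\geq n$. By Condition~\hyperref[it:n1pp]{(N1$''$)} there is a (finite) branch of the tree generated by $M$ carrying at least $\low^{m+1}(D)\geq n$ constants $a$, which is precisely what the first condition asks for. Since $n$ was arbitrary, the first condition holds.

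For the converse I would argue by contraposition, and this is where the only genuine subtlety lies. Suppose the second condition fails, that is, there is a bound $V\in\Nat$ with $\low^{m+1}(D)\leq V$ for every derivation $D$ for $\vdash_m M:(m,m,\r)$. Let $\eta_m$ be the function furnished by Condition~\hyperref[it:n2pp]{(N2$''$)}, and put $C=\max_{0\leq k\leq V}\eta_m(k)$, a finite number because the maximum ranges over a finite set. Now consider any branch of the tree generated by $M$ on which $a$ appears $n\geq 1$ times; by Condition~\hyperref[it:n2pp]{(N2$''$)} there is a derivation $D$ with $n\leq\eta_m(\low^{m+1}(D))\leq C$. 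Hence every branch carries at most $C$ constants $a$, so the first condition fails as well. The step to watch is exactly this passage: because the function $\eta_m$ is arbitrary (in particular neither monotone nor invertible), one cannot ``solve for'' $\low^{m+1}(D)$, and the argument goes through only by replacing the hypothetical value bound $V$ with the finite quantity $C=\max_{0\leq k\leq V}\eta_m(k)$. No further estimates are required, so the corollary is indeed immediate from Theorem~\ref{thm:nondet}.
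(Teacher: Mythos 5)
Your proof is correct and matches the paper's intent: the paper gives no explicit argument, presenting the statement as a ``direct corollary'' of the infinitary form of Theorem~\ref{thm:nondet}, and your two implications via (N1$''$) and (N2$''$) are exactly that argument spelled out. Your care in replacing the bound $V$ by $\max_{0\leq k\leq V}\eta_m(k)$ to avoid assuming monotonicity of $\eta_m$ is a correct and welcome detail.
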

	
	Because the latter condition is easily decidable for lambda-terms represented by recursion schemes, the corollary implies decidability of the former condition.

	\paragraph*{Bibliographic Note.}

	The type system presented in this section is essentially taken from Parys~\cite{itrs}; we have applied some cosmetic changes, though.

	In Parys~\cite{diagonal-types} the type system is extended to the task of counting multiple constants: 
	the $(m+1)$-value is not a number, but a tuple, where each coordinate of the tuple estimates the number of appearances of a particular constant.
	In particular, Corollary~\ref{cor:diag} is extended there to the property 
	``for every $n\in\Nat$, in the tree generated by $M$ there exists a branch with at least $n$ appearances of every constant from a set $A$'', giving its decidability.
	Deciding this property is known under the names \emph{simultaneous unboundedness problem} (SUP) and \emph{diagonal problem} (these are two different names for the same problem).

	SUP for recursion schemes was first solved in Clemente, Parys, Salvati, and Walukiewicz~\cite{downward-closure}, in a different way.
	The advantage of solving SUP using the type system presented here is twofold.
	First, the solution via the type system allows to obtain the optimal complexity, while the complexity of the original solution was much worse.
	Second, using the type system we can obtain so-called \emph{SUP reflection}: we can solve SUP simultaneously for all subtrees of the generated tree.
	More precisely, out of a recursion scheme we can create a new recursion scheme that generates a tree of the same shape as the original one, 
	but such that the label of every node contains additionally the answer to SUP in the subtree starting in that node 
	(i.e., the information whether in that node there start branches with arbitrarily many appearances of every constant from a set $A$).
	SUP reflection allowed to solve the model-checking problem for trees generated by recursion schemes against formulae of the WMSO+U logic~\cite{wmsou-schemes}.
	This logic extends WMSO (a fragment of MSO in which one can quantify only over finite sets) by the unbounding quantifier, $\unbound$. %~\cite{BojanczykU}.
	A formula using this quantifier, $\unbound X.\,\varphi$, says that $\varphi$ holds for arbitrarily large finite sets $X$.
	Let us also remark that decidability of SUP implies that given a language defined by a nondeterministic recursion scheme, it is possible to compute its downward closure~\cite{Zetzsche-down-clo},
	and given two such languages, it is possible to decide whether they can be separated by a piecewise testable language~\cite{Czerwinski-piecewise}.

	The type system presented here is also used by Asada and Kobayashi~\cite{koba-pumping-new} in their work on a pumping lemma for recursion schemes.

	The type system was inspired by the previous solution of SUP by Clemente et al.~\cite{downward-closure}.
	The idea of having balanced and unbalanced type triples, and treating them differently in type environments, comes from Asada and Kobayashi~\cite{word2tree}.

\section{Branching Quantities}

	Finally, we shortly mention one more quantity to be considered.
	In this part, suppose that the constant $a$ is of sort $\otyp\arr\otyp\arr\otyp$, that is, nodes with label $a$ have two children.
	For $n\in\Nat$, let $A_n$ be the full binary tree of height $n$, with all internal nodes labeled by $a$, and all leaves labeled by $\zero$.
	We say that $A_n$ \emph{embeds homeomorphically} in a tree $T$ if $T$ has a subtree of the form $T=a\,T_1\,T_2$ such that $A_{n-1}$ embeds in both $T_1$ and $T_2$ (defined by induction);
	$A_0=e$ embeds homeomorphically in every tree having a leaf labeled by $e$.
	Having a tree $T$ one may want to find the maximal height $n$ of a tree $A_n$ that embeds homeomorphically in $T$.
	It is an open problem how to estimate this quantity using a type system (or in any other way).
	
	\begin{openpr}\label{op:2}
		Design a type system such that the maximal value (appropriately defined) of a type derivation for a closed lambda-term $M$ of sort $\otyp$ 
		estimates the maximal number $n$ such that $A_n$ embeds homeomorphically in the beta-normal form of $M$.
	\end{openpr}

	Like in Section~\ref{sec:nondeterministic} (cf.\ Corollary~\ref{cor:diag}), existence of such a type system would solve the following problem concerning infinite lambda-terms 
	represented by recursion schemes.

	\begin{openpr}\label{op:3}
		Given a recursion scheme, decide whether for every $n$ the tree $A_n$ embeds homeomorphically in the (infinite) tree generated by the scheme.
	\end{openpr}

	A naive idea is to take the type system from Section~\ref{sec:nondeterministic}, and to change the rule for a constant $a$ into
	$\vdash_m \inc:(Z,\min(Z+1,m),(0,0,\r)\arr(0,0,\r)\arr\r)$.
	Notice, though, that if we derive a type for a tree $T$ using such a type system, 
	the value of the derivation counts the maximal number of constants $a$ in a tree that embeds homeomorphically in $T$.
	This is not what we want since, for example, if all $a$ are located on a single branch, then their number can be arbitrarily large while only $A_1$ can be embedded.
	In other words, we add values from the two children of an $a$-labeled node, while we should take their minimum.
	
	It seems that Open Problems~\ref{op:2} and~\ref{op:3} are closely related to the problem of computing the downward closures of languages of finite trees generated by nondeterministic recursion schemes
	(we remark that the downward closure of every language of finite trees is a regular language, due to the Kruskal's tree theorem).
	If we want to compute the downward closure of a language, we have to decide in particular whether it contains trees $A_n$ for all $n\in\Nat$, 
	that is, whether all $A_n$ embed homeomorphically in trees from the language.
	Like in the case of words, downward closures are also related to the problem of deciding whether two languages can be separated by a piecewise testable language.
	Goubault-Larrecq and Schmitz~\cite{schmitz-kruskal} derive a general framework for solving the piecewise testable separability for languages of trees.

	It is highly probable that Open Problem~\ref{op:3} can be solved for a subclass of recursion schemes, called safe recursion schemes, 
	using methods from Blumensath, Colcombet, Kuperberg, Parys, and Vanden Boom~\cite{quasi-weak}.
	This requires further investigation.

\bibliographystyle{eptcs}
\bibliography{bib}

\begin{thebibliography}{10}
\providecommand{\bibitemdeclare}[2]{}
\providecommand{\surnamestart}{}
\providecommand{\surnameend}{}
\providecommand{\urlprefix}{Available at }
\providecommand{\url}[1]{\texttt{#1}}
\providecommand{\href}[2]{\texttt{#2}}
\providecommand{\urlalt}[2]{\href{#1}{#2}}
\providecommand{\doi}[1]{doi:\urlalt{http://dx.doi.org/#1}{#1}}
\providecommand{\bibinfo}[2]{#2}

\bibitemdeclare{inproceedings}{word2tree}
\bibitem{word2tree}
\bibinfo{author}{Kazuyuki \surnamestart Asada\surnameend} \&
  \bibinfo{author}{Naoki \surnamestart Kobayashi\surnameend}
  (\bibinfo{year}{2016}): \emph{\bibinfo{title}{On Word and Frontier Languages
  of Unsafe Higher-Order Grammars}}.
\newblock In \bibinfo{editor}{Chatzigiannakis} et~al.
  \cite{DBLP:conf/icalp/2016}, pp. \bibinfo{pages}{111:1--111:13},
  \doi{10.4230/LIPIcs.ICALP.2016.111}.

\bibitemdeclare{inproceedings}{koba-pumping-new}
\bibitem{koba-pumping-new}
\bibinfo{author}{Kazuyuki \surnamestart Asada\surnameend} \&
  \bibinfo{author}{Naoki \surnamestart Kobayashi\surnameend}
  (\bibinfo{year}{2017}): \emph{\bibinfo{title}{Pumping Lemma for Higher-order
  Languages}}.
\newblock In \bibinfo{editor}{Ioannis \surnamestart
  Chatzigiannakis\surnameend}, \bibinfo{editor}{Piotr \surnamestart
  Indyk\surnameend}, \bibinfo{editor}{Fabian \surnamestart Kuhn\surnameend} \&
  \bibinfo{editor}{Anca \surnamestart Muscholl\surnameend}, editors: {\sl
  \bibinfo{booktitle}{44th International Colloquium on Automata, Languages, and
  Programming, {ICALP} 2017, July 10-14, 2017, Warsaw, Poland}}, {\sl
  \bibinfo{series}{LIPIcs}}~\bibinfo{volume}{80}, \bibinfo{publisher}{Schloss
  Dagstuhl - Leibniz-Zentrum fuer Informatik}, pp.
  \bibinfo{pages}{97:1--97:14}, \doi{10.4230/LIPIcs.ICALP.2017.97}.

\bibitemdeclare{inproceedings}{quasi-weak}
\bibitem{quasi-weak}
\bibinfo{author}{Achim \surnamestart Blumensath\surnameend},
  \bibinfo{author}{Thomas \surnamestart Colcombet\surnameend},
  \bibinfo{author}{Denis \surnamestart Kuperberg\surnameend},
  \bibinfo{author}{Paweł \surnamestart Parys\surnameend} \&
  \bibinfo{author}{Michael \surnamestart {Vanden Boom}\surnameend}
  (\bibinfo{year}{2014}): \emph{\bibinfo{title}{Two-Way Cost Automata and Cost
  Logics over Infinite Trees}}.
\newblock In \bibinfo{editor}{Thomas~A. \surnamestart Henzinger\surnameend} \&
  \bibinfo{editor}{Dale \surnamestart Miller\surnameend}, editors: {\sl
  \bibinfo{booktitle}{Joint Meeting of the Twenty-Third {EACSL} Annual
  Conference on Computer Science Logic {(CSL)} and the Twenty-Ninth Annual
  {ACM/IEEE} Symposium on Logic in Computer Science (LICS), {CSL-LICS} '14,
  Vienna, Austria, July 14 - 18, 2014}}, \bibinfo{publisher}{{ACM}}, pp.
  \bibinfo{pages}{16:1--16:9}, \doi{10.1145/2603088.2603104}.

\bibitemdeclare{inproceedings}{saturation}
\bibitem{saturation}
\bibinfo{author}{Christopher~H. \surnamestart Broadbent\surnameend},
  \bibinfo{author}{Arnaud \surnamestart Carayol\surnameend},
  \bibinfo{author}{Matthew \surnamestart Hague\surnameend} \&
  \bibinfo{author}{Olivier \surnamestart Serre\surnameend}
  (\bibinfo{year}{2012}): \emph{\bibinfo{title}{A Saturation Method for
  Collapsible Pushdown Systems}}.
\newblock In \bibinfo{editor}{Artur \surnamestart Czumaj\surnameend},
  \bibinfo{editor}{Kurt \surnamestart Mehlhorn\surnameend},
  \bibinfo{editor}{Andrew~M. \surnamestart Pitts\surnameend} \&
  \bibinfo{editor}{Roger \surnamestart Wattenhofer\surnameend}, editors: {\sl
  \bibinfo{booktitle}{Automata, Languages, and Programming - 39th International
  Colloquium, {ICALP} 2012, Warwick, UK, July 9-13, 2012, Proceedings, Part
  {II}}}, {\sl \bibinfo{series}{Lecture Notes in Computer Science}}
  \bibinfo{volume}{7392}, \bibinfo{publisher}{Springer}, pp.
  \bibinfo{pages}{165--176}, \doi{10.1007/978-3-642-31585-5\_18}.

\bibitemdeclare{inproceedings}{DBLP:conf/csl/BroadbentK13}
\bibitem{DBLP:conf/csl/BroadbentK13}
\bibinfo{author}{Christopher~H. \surnamestart Broadbent\surnameend} \&
  \bibinfo{author}{Naoki \surnamestart Kobayashi\surnameend}
  (\bibinfo{year}{2013}): \emph{\bibinfo{title}{Saturation-Based Model Checking
  of Higher-Order Recursion Schemes}}.
\newblock In \bibinfo{editor}{Simona Ronchi~Della \surnamestart
  Rocca\surnameend}, editor: {\sl \bibinfo{booktitle}{Computer Science Logic
  2013 {(CSL} 2013), {CSL} 2013, September 2-5, 2013, Torino, Italy}}, {\sl
  \bibinfo{series}{LIPIcs}}~\bibinfo{volume}{23}, \bibinfo{publisher}{Schloss
  Dagstuhl - Leibniz-Zentrum fuer Informatik}, pp. \bibinfo{pages}{129--148},
  \doi{10.4230/LIPIcs.CSL.2013.129}.

\bibitemdeclare{proceedings}{DBLP:conf/icalp/2016}
\bibitem{DBLP:conf/icalp/2016}
\bibinfo{editor}{Ioannis \surnamestart Chatzigiannakis\surnameend},
  \bibinfo{editor}{Michael \surnamestart Mitzenmacher\surnameend},
  \bibinfo{editor}{Yuval \surnamestart Rabani\surnameend} \&
  \bibinfo{editor}{Davide \surnamestart Sangiorgi\surnameend}, editors
  (\bibinfo{year}{2016}): \emph{\bibinfo{title}{43rd International Colloquium
  on Automata, Languages, and Programming, {ICALP} 2016, July 11-15, 2016,
  Rome, Italy}}. {\sl \bibinfo{series}{LIPIcs}}~\bibinfo{volume}{55},
  \bibinfo{publisher}{Schloss Dagstuhl - Leibniz-Zentrum fuer Informatik}.

\bibitemdeclare{inproceedings}{DBLP:conf/fsttcs/ClementePSW15}
\bibitem{DBLP:conf/fsttcs/ClementePSW15}
\bibinfo{author}{Lorenzo \surnamestart Clemente\surnameend},
  \bibinfo{author}{Paweł \surnamestart Parys\surnameend},
  \bibinfo{author}{Sylvain \surnamestart Salvati\surnameend} \&
  \bibinfo{author}{Igor \surnamestart Walukiewicz\surnameend}
  (\bibinfo{year}{2015}): \emph{\bibinfo{title}{Ordered Tree-Pushdown
  Systems}}.
\newblock In \bibinfo{editor}{Prahladh \surnamestart Harsha\surnameend} \&
  \bibinfo{editor}{G.~\surnamestart Ramalingam\surnameend}, editors: {\sl
  \bibinfo{booktitle}{35th {IARCS} Annual Conference on Foundation of Software
  Technology and Theoretical Computer Science, {FSTTCS} 2015, December 16-18,
  2015, Bangalore, India}}, {\sl
  \bibinfo{series}{LIPIcs}}~\bibinfo{volume}{45}, \bibinfo{publisher}{Schloss
  Dagstuhl - Leibniz-Zentrum fuer Informatik}, pp. \bibinfo{pages}{163--177},
  \doi{10.4230/LIPIcs.FSTTCS.2015.163}.

\bibitemdeclare{inproceedings}{downward-closure}
\bibitem{downward-closure}
\bibinfo{author}{Lorenzo \surnamestart Clemente\surnameend},
  \bibinfo{author}{Paweł \surnamestart Parys\surnameend},
  \bibinfo{author}{Sylvain \surnamestart Salvati\surnameend} \&
  \bibinfo{author}{Igor \surnamestart Walukiewicz\surnameend}
  (\bibinfo{year}{2016}): \emph{\bibinfo{title}{The Diagonal Problem for
  Higher-Order Recursion Schemes is Decidable}}.
\newblock In \bibinfo{editor}{Martin \surnamestart Grohe\surnameend},
  \bibinfo{editor}{Eric \surnamestart Koskinen\surnameend} \&
  \bibinfo{editor}{Natarajan \surnamestart Shankar\surnameend}, editors: {\sl
  \bibinfo{booktitle}{Proceedings of the 31st Annual {ACM/IEEE} Symposium on
  Logic in Computer Science, {LICS} '16, New York, NY, USA, July 5-8, 2016}},
  \bibinfo{publisher}{{ACM}}, pp. \bibinfo{pages}{96--105},
  \doi{10.1145/2933575.2934527}.

\bibitemdeclare{article}{regular-cost-functions}
\bibitem{regular-cost-functions}
\bibinfo{author}{Thomas \surnamestart Colcombet\surnameend}
  (\bibinfo{year}{2013}): \emph{\bibinfo{title}{Regular Cost Functions, Part
  {I:} Logic and Algebra over Words}}.
\newblock {\sl \bibinfo{journal}{Logical Methods in Computer Science}}
  \bibinfo{volume}{9}(\bibinfo{number}{3}), \doi{10.2168/LMCS-9(3:3)2013}.

\bibitemdeclare{inproceedings}{Czerwinski-piecewise}
\bibitem{Czerwinski-piecewise}
\bibinfo{author}{Wojciech \surnamestart Czerwiński\surnameend},
  \bibinfo{author}{Wim \surnamestart Martens\surnameend},
  \bibinfo{author}{Lorijn \surnamestart van Rooijen\surnameend} \&
  \bibinfo{author}{Marc \surnamestart Zeitoun\surnameend}
  (\bibinfo{year}{2015}): \emph{\bibinfo{title}{A Note on Decidable
  Separability by Piecewise Testable Languages}}.
\newblock In \bibinfo{editor}{Adrian \surnamestart Kosowski\surnameend} \&
  \bibinfo{editor}{Igor \surnamestart Walukiewicz\surnameend}, editors: {\sl
  \bibinfo{booktitle}{Fundamentals of Computation Theory - 20th International
  Symposium, {FCT} 2015, Gda{\'{n}}sk, Poland, August 17-19, 2015,
  Proceedings}}, {\sl \bibinfo{series}{Lecture Notes in Computer Science}}
  \bibinfo{volume}{9210}, \bibinfo{publisher}{Springer}, pp.
  \bibinfo{pages}{173--185}, \doi{10.1007/978-3-319-22177-9\_14}.

\bibitemdeclare{article}{Damm82}
\bibitem{Damm82}
\bibinfo{author}{Werner \surnamestart Damm\surnameend} (\bibinfo{year}{1982}):
  \emph{\bibinfo{title}{The {IO-} and {OI}-Hierarchies}}.
\newblock {\sl \bibinfo{journal}{Theor. Comput. Sci.}} \bibinfo{volume}{20},
  pp. \bibinfo{pages}{95--207}, \doi{10.1016/0304-3975(82)90009-3}.

\bibitemdeclare{inproceedings}{schmitz-kruskal}
\bibitem{schmitz-kruskal}
\bibinfo{author}{Jean \surnamestart Goubault{-}Larrecq\surnameend} \&
  \bibinfo{author}{Sylvain \surnamestart Schmitz\surnameend}
  (\bibinfo{year}{2016}): \emph{\bibinfo{title}{Deciding Piecewise Testable
  Separability for Regular Tree Languages}}.
\newblock In \bibinfo{editor}{Chatzigiannakis} et~al.
  \cite{DBLP:conf/icalp/2016}, pp. \bibinfo{pages}{97:1--97:15},
  \doi{10.4230/LIPIcs.ICALP.2016.97}.

\bibitemdeclare{inproceedings}{linear-logic-2}
\bibitem{linear-logic-2}
\bibinfo{author}{Charles \surnamestart Grellois\surnameend} \&
  \bibinfo{author}{Paul{-}Andr{\'{e}} \surnamestart Melli{\`{e}}s\surnameend}
  (\bibinfo{year}{2015}): \emph{\bibinfo{title}{Finitary Semantics of Linear
  Logic and Higher-Order Model-Checking}}.
\newblock In \bibinfo{editor}{Giuseppe~F. \surnamestart Italiano\surnameend},
  \bibinfo{editor}{Giovanni \surnamestart Pighizzini\surnameend} \&
  \bibinfo{editor}{Donald \surnamestart Sannella\surnameend}, editors: {\sl
  \bibinfo{booktitle}{Mathematical Foundations of Computer Science 2015 - 40th
  International Symposium, {MFCS} 2015, Milan, Italy, August 24-28, 2015,
  Proceedings, Part {I}}}, {\sl \bibinfo{series}{Lecture Notes in Computer
  Science}} \bibinfo{volume}{9234}, \bibinfo{publisher}{Springer}, pp.
  \bibinfo{pages}{256--268}, \doi{10.1007/978-3-662-48057-1\_20}.

\bibitemdeclare{inproceedings}{linear-logic-1}
\bibitem{linear-logic-1}
\bibinfo{author}{Charles \surnamestart Grellois\surnameend} \&
  \bibinfo{author}{Paul{-}Andr{\'{e}} \surnamestart Melli{\`{e}}s\surnameend}
  (\bibinfo{year}{2015}): \emph{\bibinfo{title}{Relational Semantics of Linear
  Logic and Higher-order Model Checking}}.
\newblock In \bibinfo{editor}{Stephan \surnamestart Kreutzer\surnameend},
  editor: {\sl \bibinfo{booktitle}{24th {EACSL} Annual Conference on Computer
  Science Logic, {CSL} 2015, September 7-10, 2015, Berlin, Germany}}, {\sl
  \bibinfo{series}{LIPIcs}}~\bibinfo{volume}{41}, \bibinfo{publisher}{Schloss
  Dagstuhl - Leibniz-Zentrum fuer Informatik}, pp. \bibinfo{pages}{260--276},
  \doi{10.4230/LIPIcs.CSL.2015.260}.

\bibitemdeclare{inproceedings}{collapsible}
\bibitem{collapsible}
\bibinfo{author}{Matthew \surnamestart Hague\surnameend},
  \bibinfo{author}{Andrzej~S. \surnamestart Murawski\surnameend},
  \bibinfo{author}{C.{-}H.~Luke \surnamestart Ong\surnameend} \&
  \bibinfo{author}{Olivier \surnamestart Serre\surnameend}
  (\bibinfo{year}{2008}): \emph{\bibinfo{title}{Collapsible Pushdown Automata
  and Recursion Schemes}}.
\newblock In: {\sl \bibinfo{booktitle}{Proceedings of the Twenty-Third Annual
  {IEEE} Symposium on Logic in Computer Science, {LICS} 2008, 24-27 June 2008,
  Pittsburgh, PA, {USA}}}, \bibinfo{publisher}{{IEEE} Computer Society}, pp.
  \bibinfo{pages}{452--461}, \doi{10.1109/LICS.2008.34}.

\bibitemdeclare{inproceedings}{Kar-Par-pumping}
\bibitem{Kar-Par-pumping}
\bibinfo{author}{Alexander \surnamestart Kartzow\surnameend} \&
  \bibinfo{author}{Paweł \surnamestart Parys\surnameend}
  (\bibinfo{year}{2012}): \emph{\bibinfo{title}{Strictness of the Collapsible
  Pushdown Hierarchy}}.
\newblock In \bibinfo{editor}{Branislav \surnamestart Rovan\surnameend},
  \bibinfo{editor}{Vladimiro \surnamestart Sassone\surnameend} \&
  \bibinfo{editor}{Peter \surnamestart Widmayer\surnameend}, editors: {\sl
  \bibinfo{booktitle}{Mathematical Foundations of Computer Science 2012 - 37th
  International Symposium, {MFCS} 2012, Bratislava, Slovakia, August 27-31,
  2012. Proceedings}}, {\sl \bibinfo{series}{Lecture Notes in Computer
  Science}} \bibinfo{volume}{7464}, \bibinfo{publisher}{Springer}, pp.
  \bibinfo{pages}{566--577}, \doi{10.1007/978-3-642-32589-2\_50}.

\bibitemdeclare{inproceedings}{KNU-hopda}
\bibitem{KNU-hopda}
\bibinfo{author}{Teodor \surnamestart Knapik\surnameend},
  \bibinfo{author}{Damian \surnamestart Niwiński\surnameend} \&
  \bibinfo{author}{Paweł \surnamestart Urzyczyn\surnameend}
  (\bibinfo{year}{2002}): \emph{\bibinfo{title}{Higher-Order Pushdown Trees Are
  Easy}}.
\newblock In \bibinfo{editor}{Mogens \surnamestart Nielsen\surnameend} \&
  \bibinfo{editor}{Uffe \surnamestart Engberg\surnameend}, editors: {\sl
  \bibinfo{booktitle}{Foundations of Software Science and Computation
  Structures, 5th International Conference, {FOSSACS} 2002. Held as Part of the
  Joint European Conferences on Theory and Practice of Software, {ETAPS} 2002
  Grenoble, France, April 8-12, 2002, Proceedings}}, {\sl
  \bibinfo{series}{Lecture Notes in Computer Science}} \bibinfo{volume}{2303},
  \bibinfo{publisher}{Springer}, pp. \bibinfo{pages}{205--222},
  \doi{10.1007/3-540-45931-6\_15}.

\bibitemdeclare{article}{Kobayashi-types}
\bibitem{Kobayashi-types}
\bibinfo{author}{Naoki \surnamestart Kobayashi\surnameend}
  (\bibinfo{year}{2013}): \emph{\bibinfo{title}{Model Checking Higher-Order
  Programs}}.
\newblock {\sl \bibinfo{journal}{J. {ACM}}}
  \bibinfo{volume}{60}(\bibinfo{number}{3}), pp. \bibinfo{pages}{20:1--20:62},
  \doi{10.1145/2487241.2487246}.

\bibitemdeclare{inproceedings}{koba-pumping}
\bibitem{koba-pumping}
\bibinfo{author}{Naoki \surnamestart Kobayashi\surnameend}
  (\bibinfo{year}{2013}): \emph{\bibinfo{title}{Pumping by Typing}}.
\newblock In: {\sl \bibinfo{booktitle}{28th Annual {ACM/IEEE} Symposium on
  Logic in Computer Science, {LICS} 2013, New Orleans, LA, USA, June 25-28,
  2013}}, \bibinfo{publisher}{{IEEE} Computer Society}, pp.
  \bibinfo{pages}{398--407}, \doi{10.1109/LICS.2013.46}.

\bibitemdeclare{inproceedings}{context-sensitive-2}
\bibitem{context-sensitive-2}
\bibinfo{author}{Naoki \surnamestart Kobayashi\surnameend},
  \bibinfo{author}{Kazuhiro \surnamestart Inaba\surnameend} \&
  \bibinfo{author}{Takeshi \surnamestart Tsukada\surnameend}
  (\bibinfo{year}{2014}): \emph{\bibinfo{title}{Unsafe Order-2 Tree Languages
  Are Context-Sensitive}}.
\newblock In \bibinfo{editor}{Anca \surnamestart Muscholl\surnameend}, editor:
  {\sl \bibinfo{booktitle}{Foundations of Software Science and Computation
  Structures - 17th International Conference, {FOSSACS} 2014, Held as Part of
  the European Joint Conferences on Theory and Practice of Software, {ETAPS}
  2014, Grenoble, France, April 5-13, 2014, Proceedings}}, {\sl
  \bibinfo{series}{Lecture Notes in Computer Science}} \bibinfo{volume}{8412},
  \bibinfo{publisher}{Springer}, pp. \bibinfo{pages}{149--163},
  \doi{10.1007/978-3-642-54830-7\_10}.

\bibitemdeclare{inproceedings}{kobayashiOng2009type}
\bibitem{kobayashiOng2009type}
\bibinfo{author}{Naoki \surnamestart Kobayashi\surnameend} \&
  \bibinfo{author}{C.{-}H.~Luke \surnamestart Ong\surnameend}
  (\bibinfo{year}{2009}): \emph{\bibinfo{title}{A Type System Equivalent to the
  Modal Mu-Calculus Model Checking of Higher-Order Recursion Schemes}}.
\newblock In: {\sl \bibinfo{booktitle}{Proceedings of the 24th Annual {IEEE}
  Symposium on Logic in Computer Science, {LICS} 2009, 11-14 August 2009, Los
  Angeles, CA, {USA}}}, \bibinfo{publisher}{{IEEE} Computer Society}, pp.
  \bibinfo{pages}{179--188}, \doi{10.1109/LICS.2009.29}.

\bibitemdeclare{inproceedings}{Ong-hoschemes}
\bibitem{Ong-hoschemes}
\bibinfo{author}{C.{-}H.~Luke \surnamestart Ong\surnameend}
  (\bibinfo{year}{2006}): \emph{\bibinfo{title}{On Model-Checking Trees
  Generated by Higher-Order Recursion Schemes}}.
\newblock In: {\sl \bibinfo{booktitle}{21th {IEEE} Symposium on Logic in
  Computer Science {(LICS} 2006), 12-15 August 2006, Seattle, WA, USA,
  Proceedings}}, \bibinfo{publisher}{{IEEE} Computer Society}, pp.
  \bibinfo{pages}{81--90}, \doi{10.1109/LICS.2006.38}.

\bibitemdeclare{inproceedings}{ho-new}
\bibitem{ho-new}
\bibinfo{author}{Paweł \surnamestart Parys\surnameend} (\bibinfo{year}{2012}):
  \emph{\bibinfo{title}{On the Significance of the Collapse Operation}}.
\newblock In: {\sl \bibinfo{booktitle}{Proceedings of the 27th Annual {IEEE}
  Symposium on Logic in Computer Science, {LICS} 2012, Dubrovnik, Croatia, June
  25-28, 2012}}, \bibinfo{publisher}{{IEEE} Computer Society}, pp.
  \bibinfo{pages}{521--530}, \doi{10.1109/LICS.2012.62}.

\bibitemdeclare{article}{numbers-journal}
\bibitem{numbers-journal}
\bibinfo{author}{Paweł \surnamestart Parys\surnameend} (\bibinfo{year}{2016}):
  \emph{\bibinfo{title}{A Characterization of Lambda-terms Transforming
  Numerals}}.
\newblock {\sl \bibinfo{journal}{J. Funct. Program.}} \bibinfo{volume}{26}, p.
  \bibinfo{pages}{e12}, \doi{10.1017/S0956796816000113}.

\bibitemdeclare{inproceedings}{itrs}
\bibitem{itrs}
\bibinfo{author}{Paweł \surnamestart Parys\surnameend} (\bibinfo{year}{2016}):
  \emph{\bibinfo{title}{Intersection Types and Counting}}.
\newblock In \bibinfo{editor}{Naoki \surnamestart Kobayashi\surnameend},
  editor: {\sl \bibinfo{booktitle}{Proceedings Eighth Workshop on Intersection
  Types and Related Systems, {ITRS} 2016, Porto, Portugal, 26th June 2016.}},
  {\sl \bibinfo{series}{{EPTCS}}} \bibinfo{volume}{242}, pp.
  \bibinfo{pages}{48--63}, \doi{10.4204/EPTCS.242.6}.

\bibitemdeclare{inproceedings}{diagonal-types}
\bibitem{diagonal-types}
\bibinfo{author}{Paweł \surnamestart Parys\surnameend} (\bibinfo{year}{2017}):
  \emph{\bibinfo{title}{The Complexity of the Diagonal Problem for Recursion
  Schemes}}.
\newblock In \bibinfo{editor}{Satya~V. \surnamestart Lokam\surnameend} \&
  \bibinfo{editor}{R.~\surnamestart Ramanujam\surnameend}, editors: {\sl
  \bibinfo{booktitle}{37th {IARCS} Annual Conference on Foundations of Software
  Technology and Theoretical Computer Science, {FSTTCS} 2017, December 11-15,
  2017, Kanpur, India}}, {\sl \bibinfo{series}{LIPIcs}}~\bibinfo{volume}{93},
  \bibinfo{publisher}{Schloss Dagstuhl - Leibniz-Zentrum fuer Informatik}, pp.
  \bibinfo{pages}{45:1--45:14}, \doi{10.4230/LIPIcs.FSTTCS.2017.45}.

\bibitemdeclare{inproceedings}{homogeneity}
\bibitem{homogeneity}
\bibinfo{author}{Paweł \surnamestart Parys\surnameend} (\bibinfo{year}{2018}):
  \emph{\bibinfo{title}{Homogeneity Without Loss of Generality}}.
\newblock In \bibinfo{editor}{H{\'{e}}l{\`{e}}ne \surnamestart
  Kirchner\surnameend}, editor: {\sl \bibinfo{booktitle}{3rd International
  Conference on Formal Structures for Computation and Deduction, {FSCD} 2018,
  July 9-12, 2018, Oxford, {UK}}}, {\sl \bibinfo{series}{LIPIcs}}
  \bibinfo{volume}{108}, \bibinfo{publisher}{Schloss Dagstuhl - Leibniz-Zentrum
  fuer Informatik}, pp. \bibinfo{pages}{27:1--27:15},
  \doi{10.4230/LIPIcs.FSCD.2018.27}.

\bibitemdeclare{inproceedings}{wmsou-schemes}
\bibitem{wmsou-schemes}
\bibinfo{author}{Paweł \surnamestart Parys\surnameend} (\bibinfo{year}{2018}):
  \emph{\bibinfo{title}{Recursion Schemes and the {WMSO+U} Logic}}.
\newblock In \bibinfo{editor}{Rolf \surnamestart Niedermeier\surnameend} \&
  \bibinfo{editor}{Brigitte \surnamestart Vall{\'{e}}e\surnameend}, editors:
  {\sl \bibinfo{booktitle}{35th Symposium on Theoretical Aspects of Computer
  Science, {STACS} 2018, February 28 to March 3, 2018, Caen, France}}, {\sl
  \bibinfo{series}{LIPIcs}}~\bibinfo{volume}{96}, \bibinfo{publisher}{Schloss
  Dagstuhl - Leibniz-Zentrum fuer Informatik}, pp.
  \bibinfo{pages}{53:1--53:16}, \doi{10.4230/LIPIcs.STACS.2018.53}.

\bibitemdeclare{inproceedings}{DBLP:conf/popl/RamsayNO14}
\bibitem{DBLP:conf/popl/RamsayNO14}
\bibinfo{author}{Steven~J. \surnamestart Ramsay\surnameend},
  \bibinfo{author}{Robin~P. \surnamestart Neatherway\surnameend} \&
  \bibinfo{author}{C.{-}H.~Luke \surnamestart Ong\surnameend}
  (\bibinfo{year}{2014}): \emph{\bibinfo{title}{A Type-Directed Abstraction
  Refinement Approach to Higher-Order Model Checking}}.
\newblock In \bibinfo{editor}{Suresh \surnamestart Jagannathan\surnameend} \&
  \bibinfo{editor}{Peter \surnamestart Sewell\surnameend}, editors: {\sl
  \bibinfo{booktitle}{The 41st Annual {ACM} {SIGPLAN-SIGACT} Symposium on
  Principles of Programming Languages, {POPL} '14, San Diego, CA, USA, January
  20-21, 2014}}, \bibinfo{publisher}{{ACM}}, pp. \bibinfo{pages}{61--72},
  \doi{10.1145/2535838.2535873}.

\bibitemdeclare{inproceedings}{Zetzsche-down-clo}
\bibitem{Zetzsche-down-clo}
\bibinfo{author}{Georg \surnamestart Zetzsche\surnameend}
  (\bibinfo{year}{2015}): \emph{\bibinfo{title}{An Approach to Computing
  Downward Closures}}.
\newblock In \bibinfo{editor}{Magn{\'{u}}s~M. \surnamestart
  Halld{\'{o}}rsson\surnameend}, \bibinfo{editor}{Kazuo \surnamestart
  Iwama\surnameend}, \bibinfo{editor}{Naoki \surnamestart Kobayashi\surnameend}
  \& \bibinfo{editor}{Bettina \surnamestart Speckmann\surnameend}, editors:
  {\sl \bibinfo{booktitle}{Automata, Languages, and Programming - 42nd
  International Colloquium, {ICALP} 2015, Kyoto, Japan, July 6-10, 2015,
  Proceedings, Part {II}}}, {\sl \bibinfo{series}{Lecture Notes in Computer
  Science}} \bibinfo{volume}{9135}, \bibinfo{publisher}{Springer}, pp.
  \bibinfo{pages}{440--451}, \doi{10.1007/978-3-662-47666-6\_35}.

\end{thebibliography}

\end{document}